%
%
%
%
%
%
\documentclass[smallextended]{svjour3}       
\smartqed  
\usepackage{graphicx}
\usepackage{csquotes}
\usepackage[table]{xcolor}
\usepackage[authoryear]{natbib}
\usepackage{amsmath}
\usepackage{amssymb}
\usepackage{mathptmx}
\usepackage{epigraph}
\usepackage[utf8]{inputenc}
%
%
%
%
%
\begin{document}

\title{Games in Minkowski Spacetime}
\titlerunning{Decision Making in Minkowski Spacetime}


\author{Ghislain Fourny}


\institute{G. Fourny \at
              ETH Z\"urich \\
              Department of Computer Science \\
              \email{ghislain.fourny@inf.ethz.ch}\\
              ORCID 0000-0001-8740-8866\\
}

\date{April 23, 2020}

\maketitle

\begin{abstract}

This paper contributes a new class of games called spacetime games with perfect information. In spacetime games, the agents make decisions at various positions in Minkowski spacetime. Spacetime games can be seen as the least common denominator of strategic games on the one hand, and dynamic games with perfect information on the other hand. Indeed, strategic games correspond to a configuration with only spacelike-separated decisions (``different rooms''). Dynamic games with perfect information, on the other hand, correspond to timelike-separated decisions (``in turn''). We show how to compute the strategic form and reduced strategic form of spacetime games. As a consequence, many existing solution concepts, such as Nash equilibria, rationalizability, individual rationality, etc, apply naturally to spacetime games. We introduce a canonical injection of the class of spacetime games with perfect information into the class of games in extensive form with imperfect information; we provide a counterexample showing that this is a strict superset. This provides a novel interpretation of a large number of games in extensive form with imperfect information in terms of the theory of special relativity, where non-singleton information sets arise from the finite speed of light. This framework can be a useful tool for reasoning in quantum foundations, where it is important whether decisions such as the choice of a measurement axis or the outcome of a measurement are spacelike- or timelike- separated. We look in particular at the special case of the Einstein-Podolsky-Rosen experiment with four decision points, and model a corresponding spacetime game structure.

\keywords{Minkowski Spacetime, Game Theory, Nash Equilibrium, Imperfect information, Strategic games, Dynamic games}

\end{abstract}

\section{Introduction}

\subsection{How spacetime was born}

For centuries, our vision of time has been absolute. \cite{Newton1687} designed his laws of motion and his theory of universal gravitation under the assumption that time elapses in the same way everywhere. Concretely, this meant that everybody's clocks, assuming they are synchronized and have ideally high precision, would continue to show the same time to all forever and that there was no ambiguity regarding the order in which events happen: everybody would agree that event A occurs before event B, or that event A and event B occur simultaneously. Furthermore, observations were consistent with the fact that, indeed, the universe seemed to behave in this way.

By the end of the nineteenth century, our knowledge of physics was well covered by Newton's equations for matter, and by Maxwell's equations for electromagnetic waves \citep{Maxwell1865}.

Except that the two theories were not consistent with each other. On the one hand, Newton's laws would require that speeds add up, e.g., if a passenger walks in a train, then a person standing in the train station would see them move with a speed that is the sum of their walking speed and that of the train; something familiar to most people in their everyday life. On the other hand, however, Maxwell's equations required that the speed of light be constant, which led the scientific community to consider the possibility that light moves in an aether medium, in the same way that sound propagates in the air \citep{Michelson1887}.

\cite{Einstein1905} finally solved this problem with a fundamental shift of paradigm: the speed of light in a vacuum does not depend on the observer. This very simple principle had dramatic consequences on our vision of time. First, it led us to completely rethink space and time as forming a unique continuum: spacetime. Second, we had to abandon the idea that all people agree on their clocks. Indeed, in spacetime, the perception of time may differ depending on how fast somebody is moving: the passenger of a spacecraft moving very fast will age less than a person on Earth. Likewise, the size of the spacecraft seen from Earth will also be squeezed.

But more importantly, and this is what will occupy the center of our attention in this paper: \emph{the order in which two events occur does not depend, in some cases, on the observer, while in other cases, it does}. In the former cases, information can be transmitted between the two events. In the latter cases, it cannot. The concept of simultaneity itself makes no sense anymore.

\subsection{Strategic and dynamic games}

A specific kind of event in the field of game theory is decision making: an economic agent who has to make a decision between, say, cooperating or defecting, is located somewhere, and at a certain time. They are making their decision based on the information at their disposal, which is mostly knowledge about the past --- and even \emph{all} the knowledge about the past in the case of perfect information--- as well as any logically inferred consequences thereof \citep{Morgenstern1953}\citep{Kreps1988}.

Game theory envisions two specific kinds of games: games in normal form, also called strategic games, and games in extensive form, also called dynamic games. These games are normally explained under a classical concept of time: for games in normal form, the agents make their decisions simultaneously, while for games in extensive form, they play in turn.

We are now going to go through each one of them, and show how their interpretation can and should be adapted to spacetime and to the fact that there is no such thing as a universal notion of simultaneity.

\subsubsection{Games in normal form}

An example of a game in normal form is shown in Figure \ref{figure-normal-form} \citep{Borel1921} \citep{Neumann1928}. The classical explanation is that Alice and Bob are in separate rooms and do not communicate with each other. They both have to make a decision: cooperate (c) or defect (d). Once they have made a decision, one of four possible payoff distributions (strategy profiles) is identified. In this example, these are the payoffs of the prisoner's dilemma, and the Nash equilibrium \citep{Nash1951} has both defect.

\begin{figure}
\centering\includegraphics[width=0.5\textwidth]{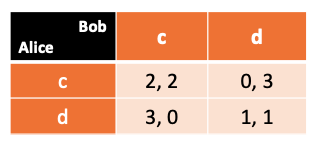}
\caption{An example of a game in normal form. The two players, Alice and Bob, do not communicate, and under Nash assumptions, make their decisions independently.}
\label{figure-normal-form}
\end{figure}

In the context of spacetime and special relativity, where the concept of simultaneity does not make sense, Alice and Bob not communicating with each other can be enforced by sending them far away from each other: for example, Alice on the Earth, and Bob on Mars, as shown in Figure \ref{figure-spacelike}. Light takes at best 4 minutes to travel between the two planets. If Bob makes his decision before the information on Alice's decision reaches him, and vice versa, then we know that no communication was possible. In such a configuration, we say that Alice and Bob's decisions are spacelike-separated (more on this in Section \ref{section-spacetime}). Note that there are many practical use cases that do not require for rational agents to be on separate planets, e.g., high-frequency trading where the durations considered are very small. Considering Mars, however, provides a nice and intuitive illustration of not being able to communicate between two points in spacetime.

\begin{figure}
\centering\includegraphics[width=0.5\textwidth]{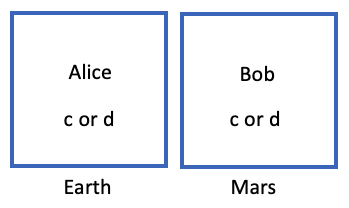}
\caption{An interpretation of the game in normal form that enforces the absence of communication with the properties of spacetime. The two players, Alice and Bob, are located very far away from each other, for example, on the Earth and Mars. Physicists call this spacelike separation.}
\label{figure-spacelike}
\end{figure}

\subsubsection{Games in extensive form (perfect information)}

An example of a game in extensive form is shown in Figure \ref{figure-extensive-form} \citep{Morgenstern1953} \citep{Kuhn1950}. In this game, called the promise game, Alice, the baker first decides to either hand over the bread (c) to Bob or not (d). If Alice handed over the bread to Bob, then Bob can choose to pay (c) or not (d). The subgame perfect equilibrium (which is a special case of a Nash equilibrium) is obtained with a backward induction \citep{Selten1965}, with Bob picking d if he plays, and Alice, knowing this, picking d as well.

More specifically, this is a game in extensive form with \emph{perfect information}, meaning that the players know all the decisions made in previous moves.

\begin{figure}
\centering\includegraphics[width=0.5\textwidth]{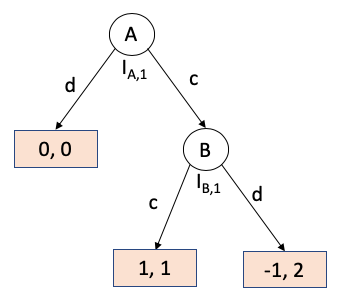}
\caption{An example of a game in extensive form with perfect information. Alice makes her decision first, and then Bob. Bob knows Alice's decision when he plays.}
\label{figure-extensive-form}
\end{figure}

In the context of spacetime and special relativity, Alice and Bob communicating with each other can be enforced by having them located close to each other, but making their decision in turn: for example, both in the same room on Earth, as shown in Figure \ref{figure-timelike}. In such a configuration, Bob knows about Alice's decision when he makes his because a hypothetical beam of light sent by Alice when she decides will have reached him by the time he makes his. In such a configuration, we say that Alice and Bob's decisions are timelike-separated (more on this in Section \ref{section-spacetime}).

\begin{figure}
\centering\includegraphics[width=0.5\textwidth]{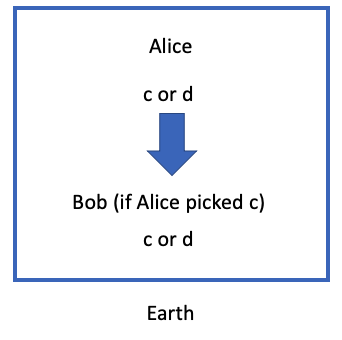}
\caption{An interpretation of the game in extensive form with perfect information that enforces the communication from Alice's decision to Bob with the properties of spacetime. The two players, Alice and Bob, make their decisions one after the other, for example, in the same room on Earth. Physicists call this timelike separation.}
\label{figure-timelike}
\end{figure}

\subsection{Games in extensive form with imperfect information}

As it turns out, and as it is well known, games in normal form and games in extensive form with perfect information are two faces of the same paradigm.

\subsubsection{Strategic form of a dynamic game}

In order to see this, we first observe, as is commonly known by game theorists, that a game in extensive form and with perfect information can be canonically transformed to a game in normal form, called its strategic form. The equivalent of our previous example, the promise game, is shown in Figure \ref{figure-normal-form-of-extensive-form}. The strategic-game equivalent is obtained algorithmically by taking, for each player, the cartesian product of all their decisions to obtain this player's strategy space and filling the payoff matrix accordingly. Duplicate payoffs may be present -- which leaks that some information is actually communicated between the players.

The strategic form of a game in extensive form can, in general, be built with each agent's strategy space being the cartesian product of all their possible choices at every node they are playing. Each such strategy can be interpreted as a kind of ``masterplan'' or as the intentions of the agents, who thought through in advance what they are going to do at every possible node of the game, should it be reached. This strategic form can then be simplified to a so-called reduced strategic form, in which identical strategies can be merged. More details can be found in game theory textbooks such as \citep{Rubinstein1994}.

\begin{figure}
\centering\includegraphics[width=0.5\textwidth]{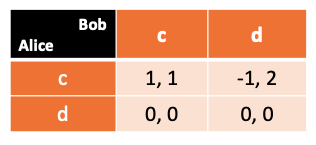}
\caption{The strategic form of the promise game, originally in extensive form with perfect information.}
\label{figure-normal-form-of-extensive-form}
\end{figure}

\subsubsection{Games with simultaneous moves}

This also works the other way round -- and it is this direction that is relevant to our paper. Indeed, a game in normal form can be equivalently expressed as a game in extensive form and with imperfect information. Figure \ref{figure-extensive-form-of-normal-form} shows the prisoner's dilemma, our original game in normal form, expressed in this way. The dotted line between a given set of nodes means that the decision is made across that set of nodes, without knowing at which node exactly one is in this set. In this case, Bob does not know if Alice picked c or d, which is thus equivalent to the original configuration in separate rooms. A game like this is also often referred to as a game in extensive form with simultaneous moves \citep{Dubey1984}.

\begin{figure}
\centering\includegraphics[width=0.5\textwidth]{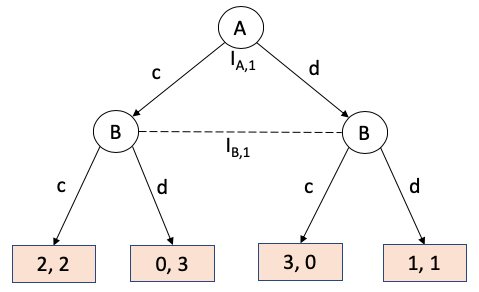}
\caption{The extensive form equivalent of the prisoner's dilemma, originally in normal form. The dotted line materializes the imperfect information.}
\label{figure-extensive-form-of-normal-form}
\end{figure}

\subsubsection{Our contribution}

The above mappings are commonly known by game theorists, although they are occasionally the object of debates and disputes regarding their semantic relevance.

In this paper, we want to adapt and extend the concept of games that have simultaneous moves to a more adequate framework consistent with special relativity. We introduce a new class of games in which the agents make their decisions at various and arbitrary positions in spacetime, and know all the decisions that they can possibly know, i.e., that they can be informed of via speeds lower than the speed of light. We call them spacetime games with perfect information.

We then show that spacetime games with perfect information can be directly mapped to (canonically injected into) games in extensive form with imperfect information, where imperfect information (often represented as dotted lines on the tree representation of the game) is interpreted as the spacelike separation of the decisions of the agents who play the game.

In other words, a large class of games in extensive form with imperfect information is newly interpreted as spacetime games with perfect information in a natural way. This interpretation subsumes the specific interpretation of some games in extensive form with imperfect information as games with simultaneous moves. Formally, as shown in Figure \ref{figure-hierarchy}, we introduce spacetime games as a more fine-grained common denominator of games in normal form on the one hand and games in extensive form with perfect information on the other hand.

\begin{figure}
\centering\includegraphics[width=\textwidth]{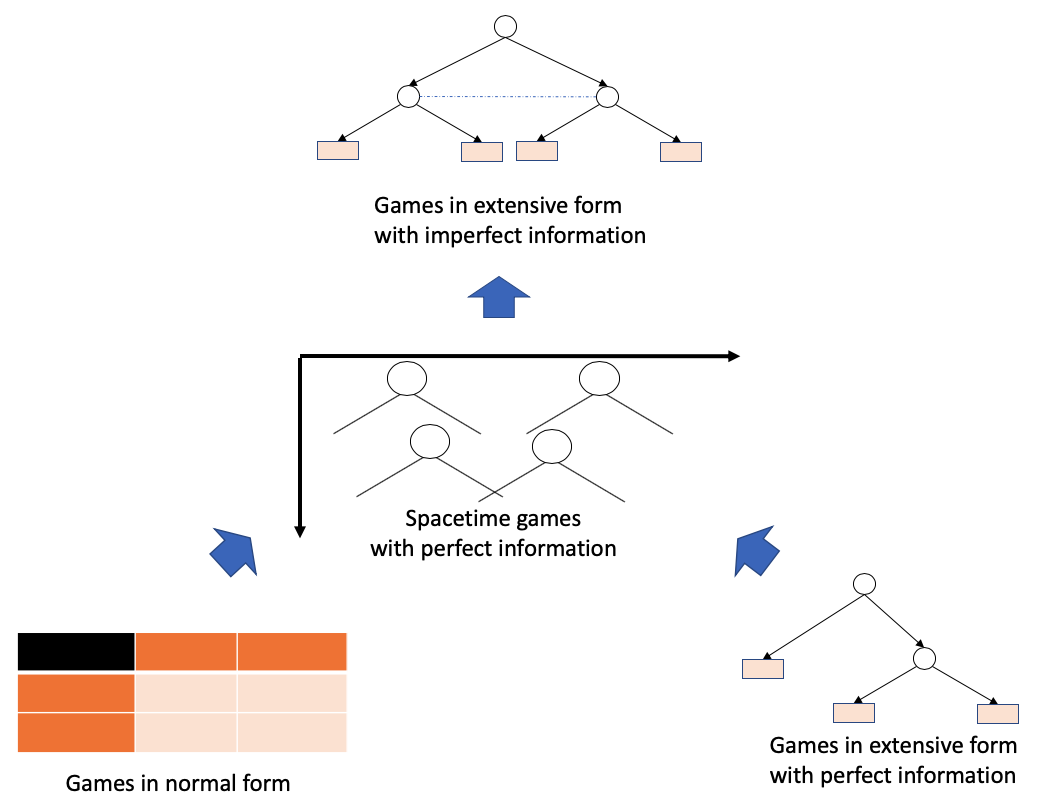}
\caption{The hierarchy of classes of games, enriched spacetime games, acting as a missing link between on the one hand games in normal form and games in extensive form with perfect information, and on the other hand, the most general category of games in extensive form with imperfect information. The big arrows denote class inclusion.}
\label{figure-hierarchy}
\end{figure}

Furthermore, the class of games in extensive form with imperfect information that enjoys a natural interpretation in spacetime according to the above correspondence is a strict subset of the set of all games in extensive form with imperfect information. We provide a counter-example.

\clearpage

\section{Background}

Before we start, we give a short introduction to games with imperfect information as well as on Minkowski spacetime. Readers already familiar with the one or the other can skip the corresponding section, even though we assume it is more likely that the reader will be familiar with game theory but not with special relativity. We will thus attempt to explain special relativity in a way as pedagogical as possible.

\subsection{Games with imperfect information}
\label{section-perfect-information}

We start with the core definitions of a game in extensive form with imperfect information, as typically encountered in literature \citep{Kuhn1950} \citep{Kuhn1953}.

What  differentiates a game with imperfect information from a game with perfect information is that, when making a choice, an agent may not be fully informed about the choices that other agents have already made. One interpretation thereof, which is the focus of our paper, is that these other decisions are being made in separate rooms, with no communication: Bob might already have decided, but Alice, who is on the other side of the wall, does not know what his decision was. Nodes are thus grouped into information sets, and a choice of action has to be taken by an agent, who does not know at which node, within this information set, they are playing.

As discussed in the previous section, games with imperfect information subsume games in normal form: a game in normal form can be expressed as a game in extensive form with imperfect information. In literature, these games can also be referred to as games with perfect information and simultaneous moves \citep{Rubinstein1994} \citep{Dubey1984}. Games with imperfect information also subsume games in extensive form with perfect information, which have only singleton information sets: this is the special case where the agents are perfectly informed about the decisions previously made by other agents, and thus they know exactly at which node they are playing.

\subsubsection{Formal definition}
\label{section-definition-game}

We take as the definition of a game in extensive form with imperfect information the same as that of \citet{Jackson2019} at Stanford, making a few more properties explicit. An alternate definition based on sequences of information sets and actions is given by \citet{Rubinstein1994} and is popular as well.

After this, we will give an example.

\begin{definition}[Game in extensive form with imperfect information]
A game $\Gamma$ in extensive form with imperfect information is a tuple $(N, A, H, Z, \chi, \rho, \sigma, u, I)$ where
\begin{itemize}
\item $N$ is a set of players.
\item $A$ is a set of actions (common across all players).
\item $H$ is a set of choice (non-terminal) nodes.
\item $Z$ is a set of outcomes.
\item $\chi \in H \rightarrow \mathcal{P}(A)$ is the action function, assigning each choice node to the set of available actions at that node.
\item $\rho \in H \rightarrow N$ is the player function, assigning each choice node to a player.
\item $\sigma \in H \times A \rightarrow H \cup Z $ is the successor function, assigning each pair of choice node and action (available at that choice node) to a choice node or outcome.

There are a few constraints on $\sigma$ to enforce that the game is a tree. First, it is injective. Second, $\sigma(h, a)$ is only defined if $a\in \chi(h)$. Lastly, $\sigma$ must organize the choice nodes and outcomes in a single connected component: there can only be one root, as opposed to a forest\footnote{This was implicit in the original definition.}.

\item $u \in N \times Z \rightarrow \mathbb{R}$ is the utility function, assigning each player and outcome to a payoff. If we are only interested in pure strategies, payoffs are ordinal, not cardinal, meaning that it only matters how they compare, but their absolute values do not. Literature thus also models utilities with an order relation, with no explicit payoff numbers. Using numbers improves readability and makes it easier to talk about examples.

\item $I$, the information partition, is an equivalence relation on $H$ that is compatible with the player function as well as with the action function. By convention the equivalence relation is expressed in terms of information sets, which are partitions $(I_{i,j})_{i,j\in N\times \mathbb{N}}$, one for each player and integer index. Formally, it fulfils for any $i$ and $j$ that $$\forall h \in I_{i,j}, \rho(h) = i$$ as well as $$\forall h, h' \in I_{i,j}, \chi(h) = \chi(h')$$
\end{itemize}
\end{definition}

\subsubsection{Example}

Figure \ref{figure-imperfect-information} shows an example of a game in extensive form and with imperfect information. Here, there are two players: Ulysses and Valentina: $N=\{U,V\}$.

There are 16 actions: $A=
\{a, b, c, d, e, f, g, h, i, j, k, l, m, n, o, p\}$.

There are 12 nodes: $N=\{n_1, ..., n_{12}\}$ and 13 outcomes: $Z=\{o_1, ..., o_{13}\}$.

Each node is associated with possible actions like so: $\chi(n_1)=\{ a, b \}$, $\chi(n_2)=\{ c, d \}$, $\chi(n_3)=\{ e, f \}$, etc..

Each node is associated with a player: $\rho(n_1)=U$, $\rho(n_2)=V$, $\rho(n_3)=V$, etc.

The nodes and outcomes are connected as a tree: $\sigma(n_1, a)=n_2$, $\sigma(n_1, b)=n_3$, etc.

The payoffs are specified like so: $u(U, o_1)=2$, $u(V, o_1)=4$, etc.

Information sets are specified as $I_{U,1}=\{n_1\}$, $I_{U,2}=\{n_4, n_5\}$, etc.

\begin{figure}
\centering\includegraphics[width=\textwidth]{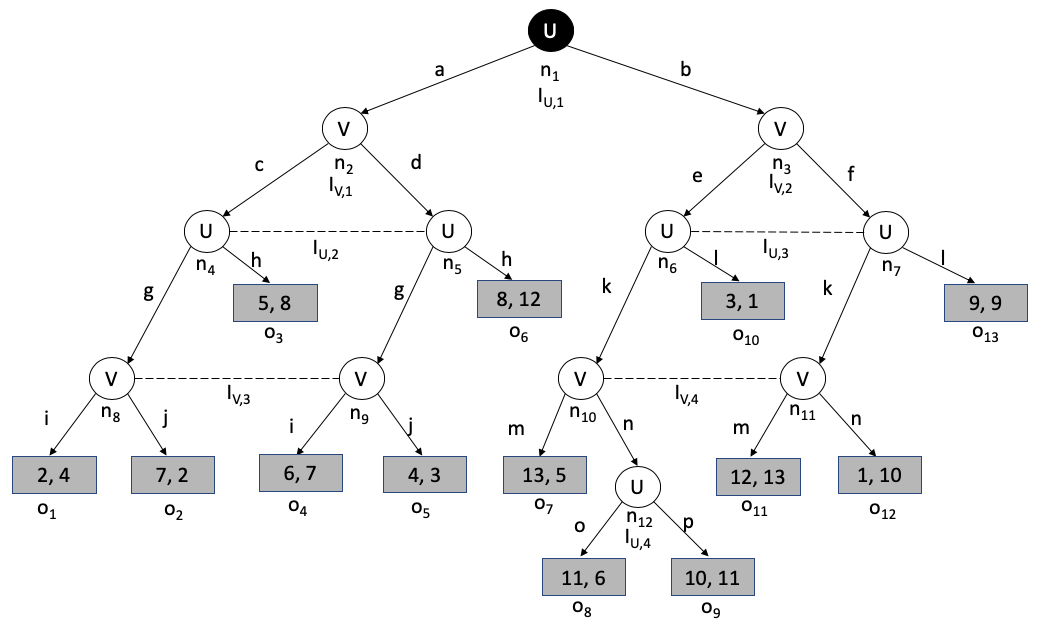}
\caption{A game in extensive form and with imperfect information. Information sets are denoted with dotted lines.}
\label{figure-imperfect-information}
\end{figure}

\subsection{A primer on Minkowski spacetime}
\label{section-spacetime}

\subsubsection{(High-level) mathematical definition}

We start with a mathematical definition, but also point out that it is not crucial for the reader to understand the math in detail. The crucial part is to understand the dependency graph that can exist between the locations of the decisions, which is a rather simple graph (see Section \ref{section-separation}).

Spacetime is modeled by physicists as a vector space $\mathbb{R}^n$, in practice, $\mathbb{R}^4$, on our example figures $\mathbb{R}^2$, where each point contains information about its ``distortion'', called a metric tensor. This is called a pseudo-Riemannian manifold.

The metric tensor is a non-degenerate, symmetric bilinear form (an $n\times n$ matrix). A bilinear form is a function that takes two points (vectors) and outputs a real number that can be interpreted as some sort of ``angle'' between them. Bilinear means that, if a point is fixed, the corresponding partial function varies linearly in the other point. Symmetric means that these metric does not depend on the order of the two points. Non-degenerate means that the determinant of its matrix is not zero. In classical physics, such bilinear forms are used to define an inner product in (three-dimensional) space, and they have to be positive definite: in Euclidian space, the bilinear form applied on the vector connecting two specific events, as both its left and right parameter (this is the associated squared norm) is always positive, and its square root is taken as the definition of Euclidian distance between these two events; this is classical geometry as most know it. However, as we will see shortly, in relativity theory, spacetime intervals (the ``Minkowski norm squared'') can be positive as well as negative.

In this paper, we furthermore only consider flat, pseudo-Euclidian manifolds, i.e., those in which the metric tensor is the same at all positions across space and time. In other words, there are no massive objects such as a black hole or a star that locally distort the structure of spacetime and we ignore gravity or consider it negligible. This is known as Minkowski spacetime \citep{Minkowski1908} and special relativity. 

In Minkowski spacetime, the spacetime interval between two points is independent of the observer, assuming that inertial timeframes (i.e., the perspectives that an observer can have) are obtained from each other by Lorentz transformations\footnote{They correspond to observers moving at constant speeds from one another. Acceleration and deceleration, equivalent to gravitation, are out of the scope of special relativity and require general relativity.}. We assume a metric signature $(n-1,1)$\footnote{``Lorentzian'' corresponds to having a 1 in the signature, i.e., only one dimension of time.}, having in mind that, in practice, this is commonly $(3,1)$\footnote{$(1,1)$ on our simplified figures.}. The first three coordinates are known as space, the last one as time. On our diagrams, though, we will only represent one dimension of space and one of time for ease of understanding.

Figure \ref{figure-primer} shows a representation of a portion of Minkowski spacetime with one dimension of space and one of time. It also shows four events put at various locations; each event is a decision. Given any pair of events, we can compute spacetime distances using the squared norm corresponding to our bilinear form, but we will only be interested, for our discussion, in the sign of that norm, and we will see that it is very easy to understand visually: in this figure, the spacetime distance between A and B is negative (and B is in the white area), and the spacetime distance between A and D on the one hand, but also between A and C on the other hand, is positive (and C and D are in a colored area). The rest of the figure (light cones, timelike/spacelike separation) can be ignored for now and will be discussed shortly.

\begin{figure}
\centering\includegraphics[width=\textwidth]{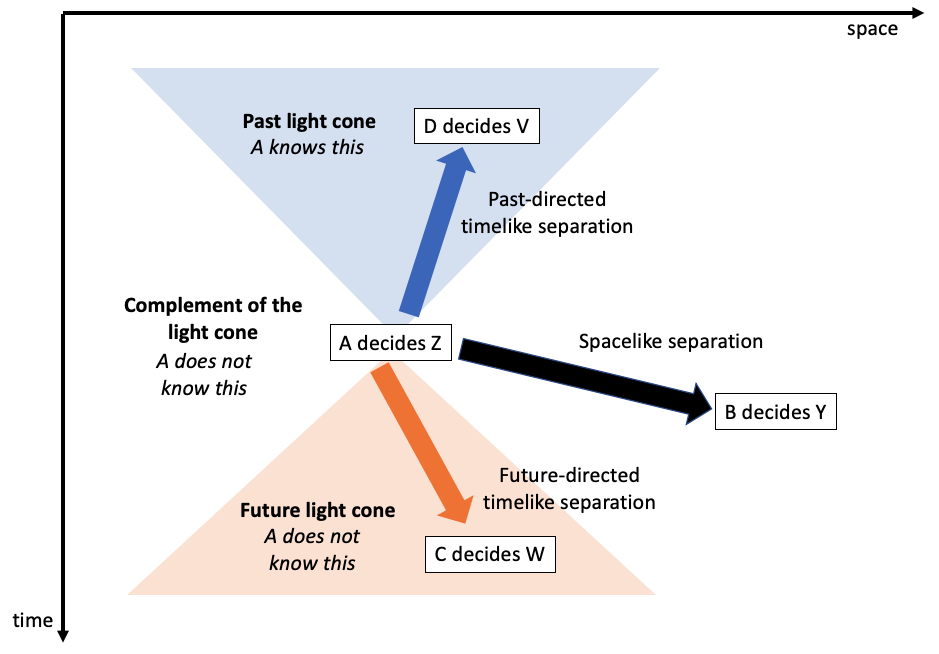}
\caption{A more abstract diagram illustrating timelike separation, spacelike separation, and the light cones. Four events are shown: A, B, C, and D, which are agents making decisions. The light cones around A are indicated in colors (but we could draw them for any other event as well) and we can see that D precedes A, A precedes C, whereas A and B occur in no specific order that everybody would agree with. From the perspective of decision theory and knowledge: A knows D's decision, C knows A's (and D's) decision, A does not know B's decision and neither does B know A's decision.}
\label{figure-primer}
\end{figure}

\subsubsection{Order of events, and timelike vs. spacelike separation}
\label{section-separation}

Given two events (this is the name given by physicists to vectors or points) in spacetime, we can calculate their spacetime interval with the bilinear form (i.e., its norm squared, applied to the difference between the events). The sign of this interval\footnote{Note that whether to assign + to timelike and - to spacelike or the opposite is a pure matter of convention. In this paper, we will also ignore the choice of sign convention and only refer to timelike and spacelike separation.} allows us to classify pairs of (distinct) events into one of two cases:

\begin{itemize}
\item either they are \emph{timelike}-separated, meaning that any observer (inertial timeframe) would see these two events occur after one another, and all observers agree on the order in which they occur. We can thus say that one of the two events \emph{precedes} the other because its time coordinates are smaller than the other event's time coordinates in any inertial timeframe. What is important is that \emph{everybody} agrees on this order of events, which is absolute.
\item or they are \emph{spacelike}-separated, meaning that the order in which these events occur depends on the observer. No signal can be sent between these two events because it would involve faster-than-light travel, equivalent to traveling back in time for some other observer. Spacelike separation, by analogy, can be thought of as being in different rooms with no way to communicate, except that the laws of physics, at least in our current understanding, actually \emph{enforce} that no communication can happen between two spacelike-separated events.
\end{itemize}

Figure \ref{figure-primer} illustrates this with four events, which are decisions made by the four agents Alice (A), Bob (B), Caroline (C) and Daniel (D). Alice is spacelike-separated from Bob, and is not informed of his decision: it would have to travel faster than light to reach her. Alice is timelike-separated from both Daniel's and Caroline's decision. More precisely, Daniel's decision precedes Alice's decision so that Alice is informed that Daniel picked V\footnote{Strictly speaking, it could be that Alice is not actually informed even though she could have known. But in our spacetime games, we assume perfect information, in the same sense as this term is commonly used in game theory so that everybody knows everything that is in their past light cone.} (and Daniel does not know what Alice will decide). Caroline's decision follows Alice's decision so that Alice does not know what Caroline will decide (but Caroline knows Alice's decision.

From that point on, it is enough to remember that two distinct events in spacetime are either spacelike or timelike separated. Understanding the details of the metric tensor is not necessary to understand our contribution. An event is simply a point in spacetime at which a decision is made.

There are only three possible cases given two decisions, made by, say, Alice and Bob, at two points in spacetime: (i) Alice and Bob don't know each other's decisions (spacelike separation); (ii) Bob knows what decision Alice made (timelike separation and Alice's decision precedes Bob's decision); (iii) Alice knows what decision Bob made (timelike separation and Bob's decision precedes Alice's decision). 

\subsubsection{Past and future light cones}

Given an event $E$, all points in spacetime that are timelike-separated from $E$ and precede $E$ (and thus, all observers agree on this) form the so-called past light cone of $E$. All points in spacetime that are timelike-separated from $E$ and follow $E$ (and thus, all observers agree on this) form the so-called future light cone of $E$. The union of the two light cones is referred to as the light cone of $E$. The points that are spacelike-separated from $E$ are referred to as the complement of the light cone and, for these, not all observers will agree on whether they precede or follow $E$. This is also shown in Figure \ref{figure-primer}.

Any agent, say, Alice, placed at $E$ can only know at most what happened in $E$'s past light cone because anything else cannot possibly reach Alice with a speed slower than that of light. Since all observers agree that all events in the past light cone of $E$ precede $E$, this is consistent with Alice's knowing what happened in any point in the past light cone of $E$, as it always precedes $E$. Furthermore, in this paper, we assume that Alice at $E$ is informed of \emph{everything} in its past light cone: in other words, perfect information (but with an understanding thereof adapted to spacetime).

\section{Spacetime games with perfect information}

Now that we have introduced the basics of (flat) spacetime, we can explain how games can be set up on this playground. We will start with two simple cases, which actually correspond to games in normal form and games in extensive form with perfect information, and then generalize to an arbitrary configuration that subsumes these two cases.

\subsection{Special cases}

\subsubsection{Games in normal form and spacelike separation}

Games in normal form are classically described as decisions made in different rooms (or simultaneously). This naturally mapped to what we have previously described as spacelike-separation. Indeed, as we saw, no transfer of information is possible between two spacelike-separated events.

We have already shown how such a game, like the prisoner's dilemma, can be played with one agent on the Earth and one on Mars (Figure \ref{figure-spacelike}), but now we can describe this situation with more precise vocabulary. Indeed, spacelike-separation is enforced if the decisions are taken within less than four minutes of each other: no signal can travel between Earth and Mars faster than this.

Figure \ref{figure-spacelike-diagramm} shows a more abstract depiction of this game where we show the two decisions in a simplified spacetime with only one dimension of space. We show the future light cone of each of the agents' making a decision (lines going to the bottom-left and bottom-right of each event). People within the future light cone of a decision are informed of this decision. But since Alice's decision and Bob's decision are not within each other's future light cones, they do not know about each other's decisions.

The notation $I_{A,1}$ that we introduce in this figure means that this is Alice (A)'s first (1) decision. Likewise, $I_{B,1}$ is Bob's first decision. $I_{A,2}$ would be used for a hypothetical second decision made (later) by Alice, and so on.

\begin{figure}
\centering\includegraphics[width=0.7\textwidth]{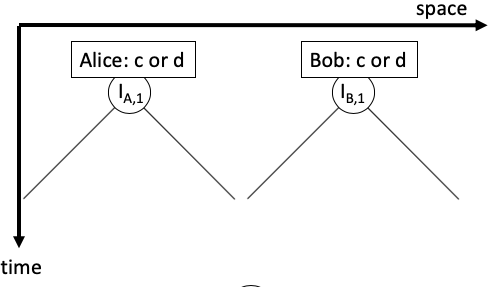}
\caption{A more abstract diagram showing the positions of Alice and Bob in spacetime, but showing only one dimension of space for the sake of simplicity.}
\label{figure-spacelike-diagramm}
\end{figure}

Finally, Figure \ref{figure-spacelike-dependence} shows another representation that is called a causal dependency graph. On a dependency, each node is an event, and two events are attached with an edge if they are timelike-separated, which means that the second node is informed of what happened at the first node. By convention and for ease of display, we only show the minimalistic version of our causal dependency graphs, called its transitive reduction (it is obvious, even though mathematically not trivial, that if $A$ precedes $B$ and $B$ precedes $C$, then $A$ precedes $C$). The graph for the game at hand, however, has no edges anywhere because there is no timelike separation. Alice and Bob are not informed of each other's decisions.

Also, we need to add for the sake of completeness that, in this configuration, Alice makes her decision no matter what, and Bob makes his decision no matter what. This is not always true.

\begin{figure}
\centering\includegraphics[width=0.3\textwidth]{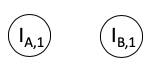}
\caption{The dependency graph of Alice's and Bob's decisions in the game in normal form. There are no edges because there is no dependency (spacelike separation).}
\label{figure-spacelike-dependence}
\end{figure}

\subsubsection{Games in extensive form with perfect information, and timelike separation}

Games in extensive form and with perfect information are classically described as decisions made in turn, which implies an elapse of time and flow of information. This naturally mapped to what we have previously described as timelike-separation. Indeed, as we saw, there is a transfer of information from each move to the subsequent moves. Also, a future node in the game will only be reached if specific moves have been made in its past.

We have already shown how such a game, like the promise game, can be played with two agents on Earth  playing in turn (Figure \ref{figure-timelike}), but now we can describe this situation with more precise vocabulary. Indeed, timelike-separation is enforced if each decision is taken by an agent after the other relevant decisions have been made, and once this agent was informed about them.

Figure \ref{figure-timelike-diagramm} shows a more abstract depiction of this game where we show the two decisions in a simplified spacetime with only one dimension of space. We show the future light cone of each of the agent's making a decision (lines going to the bottom-left and bottom-right of each event). People within the future light cone of a decision are informed of this decision. In this case, Alice's decision ($I_{A,1}$) precedes Bob's decision ($I_{B,1}$), i.e., Bob's decision is in Alice's decision's future light cone.

But there is something more on this figure that we have not seen in the discussion of games in normal form: Bob only gets to make his decision if Alice's decision was $c$. This is very natural in games in extensive form: for example, a Chess player can only move their Queen if the Queen was not previously captured by a move made by the opponent. In the extensive form, a decision at a node is only effective if this node is reached, that is, if the players in the past played \emph{towards} that node. As we will see later, this additional rectangle (``If Alice: c'') is called the contingency coordinates of Bob's decision, i.e., the pre-conditions for this decision to be made\footnote{This does not mean that Bob cannot have a plan, or specific intent, for this decision should it be reached, or should it have been reached. This is commonly done in Nash game theory and known as strategies.}.

\begin{figure}
\centering\includegraphics[width=0.5\textwidth]{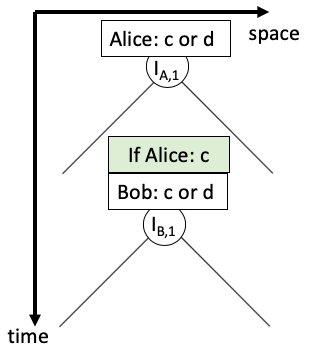}
\caption{A more abstract diagram showing the positions of Alice and Bob in spacetime, but showing only one dimension of space for the sake of simplicity.}
\label{figure-timelike-diagramm}
\end{figure}

Finally, Figure \ref{figure-timelike-dependence} shows the associated causal dependency graph. We can see here that there is an edge going from Alice's decision to Bob's decision (we did not represent the contingency coordinates on our causal dependency graphs, but they nevertheless exist). Actually, the causal dependency graph of a game in extensive form with perfect information is simply (and naturally) isomorphic to its tree representation, but without the outcomes, i.e., leaves. Likewise, a representation in spacetime can easily be obtained by arranging the nodes in the order of a depth-first (or breadth-first) traversal of the tree along the time axis. The contingency coordinates of any node are also naturally taken from all decisions that must be made in ancestor nodes (in the game's tree representation) for this node to be reached.

\begin{figure}
\centering\includegraphics[width=0.1\textwidth]{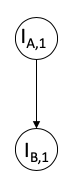}
\caption{The dependency graph of Alice's and Bob's decisions in the game in extensive form. There is an edge from Alice's decision to Bob's decision because Bob is informed of Alice's decision, and only actually makes his decision if Alice decided in a specific way (timelike separation).}
\label{figure-timelike-dependence}
\end{figure}

The case should now be clear for games in normal form, or in extensive form with perfect information. How about more generic positions in spacetime? We are now going to generalize our insights and build the spacetime game paradigm.

\subsection{Decision points}

\subsubsection{Decision making}

Assume, as before, that we have a Minkowski spacetime in place. We can place in it, at various locations, events that we call decision points. At each decision point, an agent, taken from a set $N$ of agents\footnote{As will quickly become apparent, we do use the same letters as in the definition of games in Section \ref{section-perfect-information} because of the (intended) correspondence.}, potentially\footnote{Potentially, because some preconditions on past decisions may have to be met -- we called these contingency coordinates.} makes a choice amongst a (specified) subset of a global set of actions $A$.

In the previous subsections, we looked at specific cases in which the locations of the decision points were all spacelike separated (equivalent to a game in normal form), or in which they were all timelike separated (equivalent to a game in extensive form with perfect information). Now, we relax this condition and allow for the decision points to be placed \emph{anywhere} in spacetime\footnote{There may be some obvious restrictions due to the impossibility of two persons to actually stand at the same position in space and at the same time, which we will discuss later on.}.

\begin{definition}[Decision point]
Given a set of agents $N$, a set of actions $A$, and a Minkowski spacetime manifold (usually $\mathbb{R}^4$ endowed with a (3,1) metric), a decision point $\hat{I}_{i,j}$ consists of (i) a specific agent $i\in N$ making a decision, of (ii) a (sub)set of actions $\hat{\chi}(\hat{I}_{i,j})\subseteq A$ that agent $i$ can pick from and of (iii) a specific location in spacetime, which we denote $\mathcal{L}(\hat{I}_{i,j})\in \mathbb{R}^4$, at which this decision is made.

\begin{figure}
\centering\includegraphics[width=0.7\textwidth]{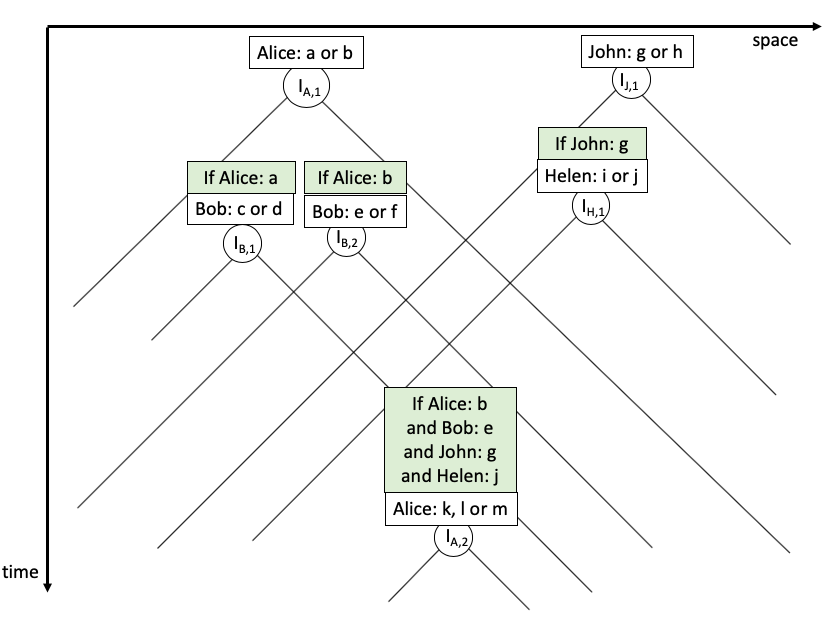}
\caption{Six decisions points, located in Minkowsky spacetime. Lines indicate their future light cones to make timelike and spacelike separation visible.}
\label{figure-spacetime-example}
\end{figure}

Decision points are indexed with two subscripts: the agent $i$ making the decision, and an integer $j\in\mathbb{N}^*$ that indexes all the decision points at which agent $i$ makes a decision\footnote{By convention increasing integers, but this second subscript is completely arbitrary and does not need to be in any relationship with the positions in spacetime.}.
\end{definition}

In the following, we call $\hat{I}$ the set of all decision points for the configuration we are considering. Figure \ref{figure-spacetime-example} shows an example with six decision points located in spacetime (here $\mathbb{R}^2$), four agents (Alice (A), Bob (B), John (J), Helen (H)) and thirteen actions (a, b, c, d, e, f, g, h, i, j, k, l, m). The spacetime coordinates correspond to their positions in the figure, in which we also made future light cones visible. Figure \ref{figure-spacetime-example} also additionally contains contingency coordinates, but we will introduce their formalism in the next section.

\subsection{Timelike partial order and causal dependency graph}

Since each decision point has a location in spacetime, given two decision points, we can thus say that they are either spacelike-separated or timelike-separated based on their coordinates\footnote{If the distance is zero and the positions are different, it means that one can be accessed from the other only by traveling at light speed. We thus consider them to be timelike-separated because information can be sent from one to the other at the velocity of light. If, however, the two points are at the same location, we consider them to be spacelike separated -- because agents are made of fermions, i.e., put simply, particles that cannot co-exist at the same location, this would only physically make sense if these coinciding decision points actually take place under different conditions on the past, for example, one if an agent picked C earlier, and the other if that agent picked D earlier so that this is a matter of convention.}. Timelike separation induces a partial order on decision points, which models that a decision point occurs after another one for any observer.  We formally define it now.

\begin{definition}[Timelike precedence ($\prec$)]
Given a set of agents, a set of actions, a Minkowski spacetime manifold and a set of decision points located in it, if two decision points are timelike-separated, then one of them occurs before the other one. We use the partial order $\prec$ to denote timelike-separation according to this order, i.e., $\hat{I}_{i,j}\prec\hat{I}_{k,l}$ if $\hat{I}_{i,j}$ and $\hat{I}_{k,l}$ are timelike-separated, and the former occurs before the latter\footnote{There is the special case of comparing a decision point with itself, and we could say by convention that it both precedes and follows itself for $\prec$ to be mathematically and formally reflexive. However, we do not need in this paper to compare a decision point with itself.}.
\end{definition}

\begin{figure}
\centering\includegraphics[width=0.5\textwidth]{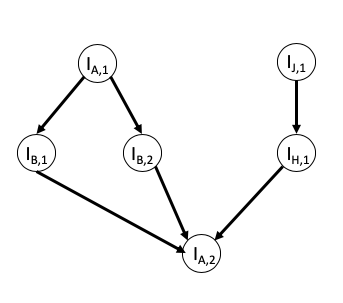}
\caption{The causal dependency graph (which is directed acyclic graph) representing the timelike separation partial order $\prec$ on our example.}
\label{figure-spacetime-example-timelike-dependence}
\end{figure}

The partial order is completely and unambiguously defined by the spacetime locations of the decision points. In practice, the exact locations are irrelevant and only this partial order matters. The partial order can visually be represented as a Directed Acyclic Graph (DAG) called a causal dependency graph. Figure \ref{figure-spacetime-example-timelike-dependence} shows, for the same example, the causal dependency graph that makes the partial order on timelike separation explicit, based on the light cones drawn in Figure \ref{figure-spacetime-example}.

\subsection{Contingency coordinates}

When a game is played, it depends on the past whether a specific decision is made or not. This is familiar to anybody who uses games in extensive form: a player only actually plays at a specific node $n$ if, in the past, all previous players made decisions that led precisely to that node $n$. The same applies to spacelike games: a decision has preconditions, which we call contingency coordinates.

\subsubsection{Preconditions for a decision}

Let us go back to the example of Figure \ref{figure-spacetime-example}. In this spacetime game, Bob will only actually make a decision and pick between c and d if Alice formerly picked a. Likewise, Bob will only actually make a decision and pick between e and f if Alice formerly picked b. Helen will only actually make a decision and pick between i and j if John formerly picked g. Finally, Alice will only actually make a decision and pick between k, l and m if she previously picked b \emph{and} Bob picked e \emph{and} John picked g \emph{and} Helen picked j.

We call the set of decisions that must have been made in the past for another decision $\hat{I}_{i,j}$ to take place the \emph{contingency coordinates} of $\hat{I}_{i,j}$. 

So, to fully describe the decisional configuration that we want to model, we must provide, for each decision point, in addition to its spacetime coordinates, its contingency coordinates. Formally, the contingency coordinates of a decision point are the actions that must be taken at previous, timelike-separated decision points for this decision point to actually be reached. In other words, the ability to make that decision must be caused\footnote{The notation of causality, in this paper, coincides with timelike-separation.} by a specific chain of events at decision points preceding it in spacetime.

First, we need a formal definition of a raw assignment of decisions, which will also be useful for further definitions such as histories and strategies.

\begin{definition}[Raw assignment of decisions]
Given a set of agents, a set of actions, a Minkowski spacetime manifold, a set of decision points located in it and their associated possible actions denoted with the mapping $\hat\chi$, a raw assignment of decisions is a partial function\footnote{$A\supset \mspace{-3mu} \rightarrow B$ denotes the set of partial functions from A to B, i.e., not every element in A is associated with an element in B.}  $\alpha\in\hat{I} \supset \mspace{-3mu} \rightarrow A$ that associates some decision points, e.g., $\hat{I}_{i,j}$ with an action $\alpha(\hat{I}_{i,j})\in\hat\chi(\hat{I}_{i,j})$.
\end{definition}

A raw assignment of decisions is thus a partial function from the set of decision points to the set of actions. If no action is assigned to a specific decision point, we denote it $\alpha(\hat{I}_{k,l})=\perp$. As a consequence, giving two raw assignments, we can use function terminology and say that one can be a restriction or an extension of the other, meaning that they coincide on the intersection of their domains, and the domains are in a subset/superset relationship. Given two raw assignments that perfectly match on their domains, we can also define their union by noticing that a partial function is a relation, and that a relation is a subset of $\hat{I}\times A$. The union of the two correspond sets is then also a partial function from $\hat{I}$ to $A$ because the original partial functions match on the intersection of their supports.
 
Figure \ref{figure-spacetime-example-assignments} shows, at the top, some arbitrary raw assignment that does not actually mean anything, for the purpose of illustration.

\begin{figure}
\centering\includegraphics[width=\textwidth]{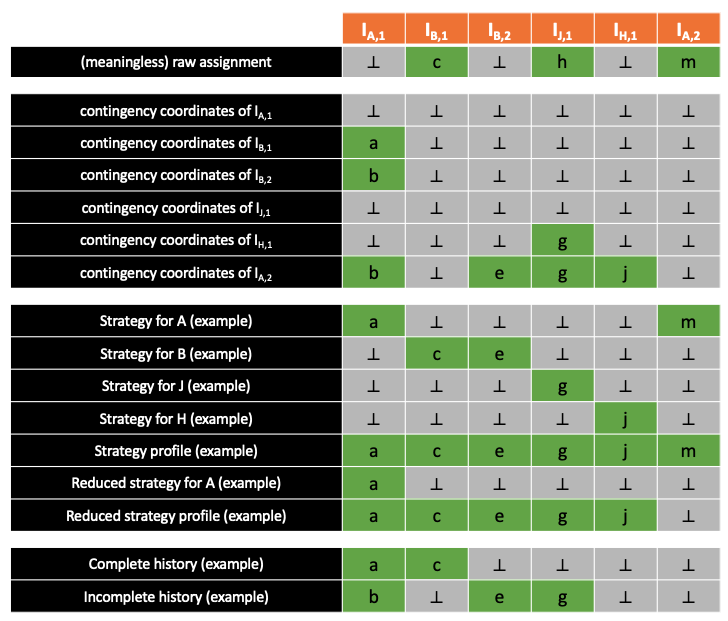}
\caption{Many examples of various assignments on our running example. We show first a meaningless random assignment. Then, the contingency coordinates of our six decision points, consistent with Figure \ref{figure-spacetime-example}. We also show strategies, a strategy profile, a reduced strategy, a reduced strategy profile, a complete history, and an incomplete history, which are all, formally, assignments. Note that the order in which we show the six decision points was arbitrarily chosen, but to avoid any confusion, we picked an order that respects timelike precedence (this is called a linearization of the timelike precedence DAG, as we will see later).}
\label{figure-spacetime-example-assignments}
\end{figure}

Now, we can define the contingency coordinates of a decision point as a particular raw assignment. Many more particular classes of raw assignments, with their own different restrictions, will be introduced later.

\begin{definition}[Contingency coordinates]
Given a set of agents, a set of actions, a Minkowski spacetime manifold, a set of decision points located in it (with $\hat\chi$ denoting the mapping of decision points to choosable actions), and a specific decision point $\hat{I}_{i,j}$ in that set, the contingency coordinates of decision point $\hat{I}_{i,j}$ are a raw assignment of decisions denoted denoted $\gamma_{i,j}$ that is only defined on decision points that timelike-precede $\hat{I}_{i,j}$.

Semantically, agent $i$ only gets to make their $j$-th decision if agent $k$ previously picked $\gamma_{i,j}(\hat{I}_{k,l})$ as their $l$-th decision, and this must be true for every $k$ and $l$ for which the assignment is defined.
\end{definition}

On our example (Figure \ref{figure-spacetime-example}), formally, $\gamma_{A,1}(\hat{I}_{k, l})=\perp$ for any decision point $\hat{I}_{k, l}$ because Alice makes her first decision in any case. $\gamma_{B,1}(\hat{I}_{A, 1})=a$ (and it is $\perp$ for all other decision points) because Bob only makes his decision if Alice previously picked a. And for Alice's second decision, our requirements translate to $\gamma_{A,2}(\hat{I}_{A, 1})=b$, $\gamma_{A,2}(\hat{I}_{B, 1})=e$, $\gamma_{A,2}(\hat{I}_{J, 1})=g$, and $\gamma_{A,2}(\hat{I}_{H, 1})=j$. In general, the family of functions $(\gamma_{i,j})_{i,j}$ can directly be read from a diagram such as Figure \ref{figure-spacetime-example}. They are shown in Figure \ref{figure-spacetime-example-assignments}, below the arbitrary raw assignment.

With contingency coordinates, we can express a more fine-grained precedence relation: a decision point might precede another decision point, but if their contingency coordinates mismatch, there is no actual realization of a world in which both decisions are made. For example, there is no world in which Bob picks $d$ at $\hat{I}_{B, 1}$ and Alice picks $k$ at $\hat{I}_{A, 2}$ because Alice's picking $k$ at $\hat{I}_{A, 2}$ requires her having previously picked $b$, whereas Bob's picking d at $\hat{I}_{B, 1}$ requires Alice having previously picked $a$ before. Therefore $\hat{I}_{B, 1}$ does not \emph{actually precede} $\hat{I}_{A, 2}$.

\begin{definition}[Actual precedence]
Given a set of agents, a set of actions, a Minkowski spacetime manifold, a set of decision points located in it (with $\hat\chi$ denoting the mapping of decision points to choosable actions), and two specific decision points $\hat{I}_{i,j}$ and $\hat{I}_{k,l}$, we say that $\hat{I}_{k,l}$ actually precedes $\hat{I}_{i,j}$ if (i) $\hat{I}_{k,l}\prec\hat{I}_{i,j}$ (i.e., the former timelike-precedes the latter) and (ii) $\gamma_{k,l}$ is a restriction of $\gamma_{i,j}$, which means formally:

$$\forall \hat{I}_{m, n}, \gamma_{k,l}(\hat{I}_{m, n})\neq\perp \implies \gamma_{k,l}(\hat{I}_{m, n}) = \gamma_{i,j}(\hat{I}_{m, n}) $$
\end{definition}

Actual precedence is thus a stricter relation than timelike precedence and corresponds to the fact that there is a (realizable) world with a chain of events in which both decisions happen one after the other. Figure \ref{figure-spacetime-example-actual-dependence} shows the actual dependence graph for our running example.

\begin{figure}
\centering\includegraphics[width=0.5\textwidth]{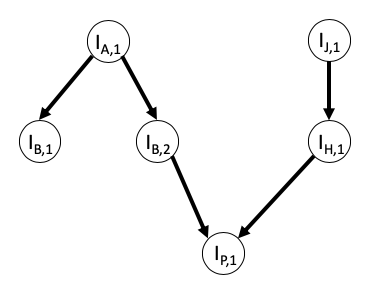}
\caption{The dependency graph corresponding to actual precedence according to the contingency coordinates. It is a subset of the causal dependency graph of Figure \ref{figure-spacetime-example-timelike-dependence}.}
\label{figure-spacetime-example-actual-dependence}
\end{figure}

There is one last step: some assignments of contingency coordinates make no sense. For example, we saw that $\hat{I}_{B, 1}$ does not actually precede $\hat{I}_{A, 2}$. But what if $\gamma_{A,2}(\hat{I}_{B, 1})\neq\perp$? It makes no sense to say that Alice gets to make her second decision if Bob's decision at $\hat{I}_{B, 1}$ was this or that because both can never happen jointly. There is thus a restriction on the contingency coordinates of any decision point: they can only be, and must be, defined on the decision points that actually precede it.

\begin{definition}[Consistent contingency coordinates]
Given a set of agents, a set of actions, a Minkowski spacetime manifold, a set of decision points located in it, carrying contingency coordinates $(\gamma_{i,j})_{i,j}$, we say that the contingency coordinates $(\gamma_{i,j})_{i,j}$ are consistent if, for any two decision points $\hat{I}_{i,j}$ and $\hat{I}_{k,l}$, $\gamma_{i,j}(\hat{I}_{k,l})\neq\perp$ if and only if $\hat{I}_{k,l}$ actually precedes $\hat{I}_{i,j}$
\end{definition}

We require that, for any spacetime game with perfect information, all contingency coordinates are consistent. If a game has inconsistent contingency coordinates, then it can be made consistent by pruning decision points that can never be reached, i.e., that have inconsistent contingency coordinates.

The contingency coordinates shown in Figure \ref{figure-spacetime-example-assignments} fulfill this consistency criterion.

\subsection{Histories}

If we now start asking the various involved agents to make their decisions according to the configuration we have just put in place, we will get an assignment of decisions to (actually reached) decision points. As described in the previous section, a decision is only made if the preconditions for it are fulfilled, which means that not necessarily \emph{all} decision points will be assigned a chosen action.

Such a list of decisions is one possibility of how things can happen: one possible world. We now look at \emph{all} possible worlds that can be instantiated from the decision points, i.e., considering all possible actions that can be chosen consistently.

A possible history is defined as a raw assignment of decisions that is compatible with their contingency coordinates.

\begin{definition}[History]
Given a set of agents, a set of actions, a Minkowski spacetime manifold, a set of decision points located in it, carrying contingency coordinates $(\gamma_{i,j})_{i,j}$, a history is a raw assignment of decisions $h$ that is consistent with the contingency coordinates of each decision point on which it is defined, in the sense that $h$ is an extension of the contingency coordinates of all decision points on which it is defined. Formally:

$$\forall \hat{I}_{i,j}, \hat{I}_{k,l}, h(\hat{I}_{i,j})\neq \perp \wedge \gamma_{i,j}(\hat{I}_{k,l})\neq \perp \implies h(\hat{I}_{k,l})=\gamma_{i,j}(\hat{I}_{k,l})$$.

A history is said to be \emph{complete} if it cannot be extended to a more complete history, i.e., that assigns decisions to strictly more decision points. Otherwise, it is said to be \emph{incomplete}.
\end{definition}

Figure \ref{figure-spacetime-example-assignments} shows both an example of a complete history and an incomplete history.

Note that, in any history, complete or incomplete, for any two decision points that are assigned a decision, if one timelike-precedes the other, then it follows that the former also actually precedes the latter. This directly follows from the consistency criterion of the history. A history thus corresponds to an actual DAG of events happening in the same possible world.

\subsection{Payoffs}

A complete history corresponds to a possible complete play of the game. Since it is consistent, only decisions that are actually reached are made, and since it is complete, all decisions that are actually reached are made. For each complete history, we can associate payoffs that model the preference of the agents between different complete histories.

Formally, we model the agent's preferences with a function $u$ mapping each complete history and agent to a number. In a scenario in which we only consider ordinal payoff semantics, it is equivalent to a total order relation over complete histories for each agent.

\begin{definition}[Payoffs]
Given a set of agents, a set of actions, a Minkowski spacetime manifold, a set of decision points located in it, carrying contingency coordinates, a utility function is a function $u$ that assigns each decision point and agent to a real number: $u\in\hat{I}\times N\rightarrow\mathbb{R}$

\end{definition}

\subsection{Formal definition}

We can now summarize and formally define what a spacetime game is (which is mostly a wrap up of the concepts introduced hitherto).

\begin{definition}[Spacetime game with perfect information]
A spacetime game with perfect information is a tuple $\Gamma=(N, A, \hat{I}, \prec, \hat{\chi}, \gamma, u)$ comprising:
\begin{itemize}
\item a set of agents $N$;
\item a set of actions $A$;
\item a set of decision points $(\hat{I}_{i, j})_{i\in N,j\in \mathbb{N}}$, each located in spacetime, with their relative locations encapsulated in a timelike-precedence relation $\prec$;
\item a mapping from decision points to a set of actions $\hat\chi\in\hat{I}\rightarrow\mathcal{P}(A)$;
\item a mapping $(\gamma_{i,j})_{i,j}$ associating each decision point $\hat{I}_{i, j}$ to its contingency coordinates\footnote{i.e., a raw assignment that fulfils the contingency coordinate constraints.}, i.e., for any $i,j$, $\gamma_{i,j}\in\hat{I}\supset \mspace{-3mu} \rightarrow A$ with the constraint that the consistency condition on $\gamma$ for contingency coordinates is met;
\item and payoff distributions $u$ assigning a real number to each agent as well as complete history\footnote{i.e., a raw assignment in $A^{\hat{I}}$that fulfils the complete history constraints.}. 
\end{itemize} 
\end{definition}

\section{Strategic form of a spacetime game}

Any spacetime game with perfect information can be naturally associated with an equivalent game in normal form. This is similar to how games with imperfect information can also all be associated with an equivalent game in normal form (called its strategic form). The strategic form of a spacetime game can be seen as an interpretation of the game in which the players decide and plan, in advance and in separate rooms, of what they would do for any possible game play\footnote{which in turn can be seen as a spacetime game with only spacelike-separated decisions.}.

\subsection{Strategies}

The strategy of an agent can be defined as a choice of action for each one of their decisions (actual or not).

\begin{definition}[Strategy]
Given a spacetime game  $\Gamma=(N, A, \hat{I}, \prec, \hat{\chi}, \gamma, u)$, a strategy for agent $i\in N$ is a raw assignment that assigns an action to all decision points at which this agent decides $(\hat{I}_{i,j})_j$ and only those. 

The set of all such strategies for a given agent constitutes their strategy space $\Sigma_i$.
\end{definition}

Figure \ref{figure-spacetime-example-assignments} shows a few strategies, one for each agent.

\subsection{Strategy profile}

If all agents pick a strategy, this jointly forms a strategy profile.

\begin{definition}[Strategy profile]
A strategy profile is a raw assignment that maps every decision point to an action. The space of strategy profiles is thus a subset of $\hat{I}\rightarrow A$. If each agent picks a strategy from their strategy space, then a strategy profile can be built by combining all these strategies into a complete raw assignment, meaning formally that we take the union of all strategies if we see them as relations between the sets $\hat{I}$ and $A$.

The space of strategy profiles is thus the cartesian product of all agents' strategy spaces $(\Sigma_i)_i$.
\end{definition}

Any strategy profile uniquely determines a complete history, which in turn specifies the payoffs of a game in normal form, with the strategy spaces inferred as explained above. Figure \ref{figure-spacetime-example-assignments} shows the strategy profile corresponding to the four strategies shown in the rows above it, as well as the unique complete history that is a restriction of this strategy profile.

\begin{theorem}[Strategic form of a spacetime game with perfect information]
Given a spacetime game and a strategy profile, there is a unique complete history that is a restriction of this strategy profile. Thus, any strategy profile can be associated with the payoffs corresponding to this complete history.
This implicitly associates a strategic game (in normal form) with any spacetime game with perfect information. 
\end{theorem}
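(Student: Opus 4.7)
The plan is to construct the complete history $h$ explicitly by induction along a topological linearization of the timelike-precedence partial order $\prec$. Fix any enumeration of $\hat{I}$ compatible with $\prec$, and process decision points in this order: having defined $h$ on all earlier points, consider the current one, $\hat{I}_{i,j}$, together with its contingency coordinates $\gamma_{i,j}$; if for every $\hat{I}_{k,l}$ with $\gamma_{i,j}(\hat{I}_{k,l}) \neq \perp$ we already have $h(\hat{I}_{k,l}) = \gamma_{i,j}(\hat{I}_{k,l})$, set $h(\hat{I}_{i,j}) := \sigma(\hat{I}_{i,j})$; otherwise set $h(\hat{I}_{i,j}) := \perp$. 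This is well-defined because the consistency condition on contingency coordinates forces the support of $\gamma_{i,j}$ to consist of actual predecessors of $\hat{I}_{i,j}$, which in particular timelike-precede it and thus appear strictly earlier in the linearization. Two basic checks follow directly: $h$ is a history (the success branch of the test is precisely the consistency requirement on $h$) and $h$ is a restriction of $\sigma$ (the only non-$\perp$ value ever assigned is $\sigma(\hat{I}_{i,j})$).

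The technical core is completeness, which I would prove by contradiction using minimality in $\prec$. Suppose a history $h'$ strictly extends $h$, and let $T = \{\hat{I} : h(\hat{I}) = \perp,\ h'(\hat{I}) \neq \perp\}$; choose a $\prec$-minimal element $\hat{I}^* = \hat{I}_{i,j}$ of $T$. Consistency of $h'$ then forces $h'(\hat{I}_{k,l}) = \gamma_{i,j}(\hat{I}_{k,l})$ for every $\hat{I}_{k,l}$ in the support of $\gamma_{i,j}$. Each such $\hat{I}_{k,l}$ actually precedes, hence timelike-precedes, $\hat{I}^*$, so by minimality $\hat{I}_{k,l} \notin T$; combined with $h' \supseteq h$ and $h'(\hat{I}_{k,l}) \neq \perp$, this yields $h(\hat{I}_{k,l}) = h'(\hat{I}_{k,l}) = \gamma_{i,j}(\hat{I}_{k,l})$. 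But that is precisely the condition under which the construction sets $h(\hat{I}^*) = \sigma(\hat{I}^*) \neq \perp$, contradicting $\hat{I}^* \in T$. I expect this cascading-minimality step to be the main obstacle: the crucial observation is that $\prec$-minimality is what allows the consistency of $h'$ to reconstruct the very check that determined whether $h$ was defined at the same point.

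Uniqueness follows by a symmetric minimality argument. For any other complete history $\tilde{h}$ restricting $\sigma$, consider $T' = \{\hat{I} : h(\hat{I}) = \perp,\ \tilde{h}(\hat{I}) \neq \perp\}$; the same cascade (now using that $\tilde{h}$ itself restricts $\sigma$, so $\tilde{h}(\hat{I}_{k,l}) = \sigma(\hat{I}_{k,l}) = \gamma_{i,j}(\hat{I}_{k,l})$ whenever $\tilde{h}(\hat{I}_{k,l}) \neq \perp$) shows that $T'$ is empty, so $h$ extends $\tilde{h}$, and then completeness of $\tilde{h}$ forces $h = \tilde{h}$. Finally, the strategic form assembles itself: the strategy spaces $(\Sigma_i)_{i \in N}$ are already defined, and composing the map sending a strategy profile to its unique complete history with the payoff function $u$ yields a well-defined normal-form game on $\prod_{i \in N} \Sigma_i$.
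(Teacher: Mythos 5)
Your proof is correct, but it takes a genuinely different route from the paper's. The paper proves only the \emph{uniqueness} half directly: it takes two complete histories $h$ and $h'$ that are restrictions of the strategy profile, observes that they must agree on the intersection of their supports (both equal the profile there), forms their union $h''$, verifies that $h''$ still satisfies the history condition, and then invokes completeness of $h$ and $h'$ to conclude $h = h'' = h'$; existence is left implicit. You instead \emph{construct} the complete history by a forward pass along a topological linearization of $\prec$, assigning the profile's action exactly when the contingency coordinates are already satisfied, and then establish completeness and uniqueness by minimal-counterexample arguments over the difference sets $T$ and $T'$. Your cascading-minimality step is sound: the consistency condition on contingency coordinates guarantees that the support of $\gamma_{i,j}$ consists of timelike predecessors, so a $\prec$-minimal violation would have passed the construction's test, a contradiction. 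What your approach buys is an explicit, algorithmic existence proof and a canonical representative against which any other complete history can be compared; what the paper's approach buys is brevity and independence from any choice of linearization, since the union-of-partial-functions argument works directly on the partial order. One cosmetic caveat: you overload $\sigma$ for the strategy profile, which collides with the successor function of the extensive-form construction elsewhere in the paper (the paper writes $\Sigma$ for the profile); rename it to avoid confusion if this is to sit alongside the paper's notation.
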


\begin{proof}[Unique complete history]
Let us take two complete histories $h$ and $h'$ that are restrictions of a given strategy profile $\Sigma \in \hat{I}\rightarrow A$ and show that they are the same.

First, it has to be the case that the two histories perfectly match on the intersection of their support, as they have to then both associate any decision point $\hat{I}_{i,j}$ in this support with $\Sigma(\hat{I}_{i,j})$.

Let us now define $h''$ as the union of these two complete histories\footnote{Remember that we previously defined what we mean by union of two raw assignments that perfectly match on the intersection of their supports.}. It directly follows that $h''$ is also a restriction of $\Sigma$.

Because $h$ and $h'$ are histories, we have:

$$\forall \hat{I}_{i,j}, \hat{I}_{k,l}, h(\hat{I}_{i,j})\neq \perp \wedge \gamma_{i,j}(\hat{I}_{k,l})\neq \perp \implies h(\hat{I}_{k,l})=\gamma_{i,j}(\hat{I}_{k,l})$$.

$$\forall \hat{I}_{i,j}, \hat{I}_{k,l}, h'(\hat{I}_{i,j})\neq \perp \wedge \gamma_{i,j}(\hat{I}_{k,l})\neq \perp \implies \hat{h'}(\hat{I}_{k,l})=\gamma_{i,j}(\hat{I}_{k,l})$$.

Now, if we take two decision points $\hat{I}_{i,j}, \hat{I}_{k,l}$ such that 

$$h''(\hat{I}_{i,j})\neq \perp \wedge \gamma_{i,j}(\hat{I}_{k,l})\neq \perp$$

Then it implies that either $h(\hat{I}_{i,j})\neq \perp$ or $h'(\hat{I}_{i,j})\neq \perp$ by definition of $h''$. Let us assuming it is $h$ without loss of generality. But then it follows from $h$ being a history that 

$$h''(\hat{I}_{k,l})=h(\hat{I}_{k,l})=\gamma_{i,j}(\hat{I}_{k,l})$$

This is true for any two decision points, and thus $h''$ is also a history. But since both $h$ and $h'$ are complete histories, they cannot be extended to more complete histories and it follows that $h=h''=h'$. Thus, there can only be at most one complete history that is a restriction of a strategy profile.
$\square$
\end{proof}

The complete history shown in \ref{figure-spacetime-example-assignments} is the (unique) complete history corresponding to the strategy shown above.

Figure \ref{figure-strategic-form} shows what our running example looks like if we build all possible strategies for each agent, obtain all possible strategy profiles. The figure shows some (arbitrarily) chosen payoffs -- note that the payoffs are consistent, in the same that two strategy profiles corresponding to the same complete history have the same payoffs.

\begin{figure}
\centering\includegraphics[width=\textwidth]{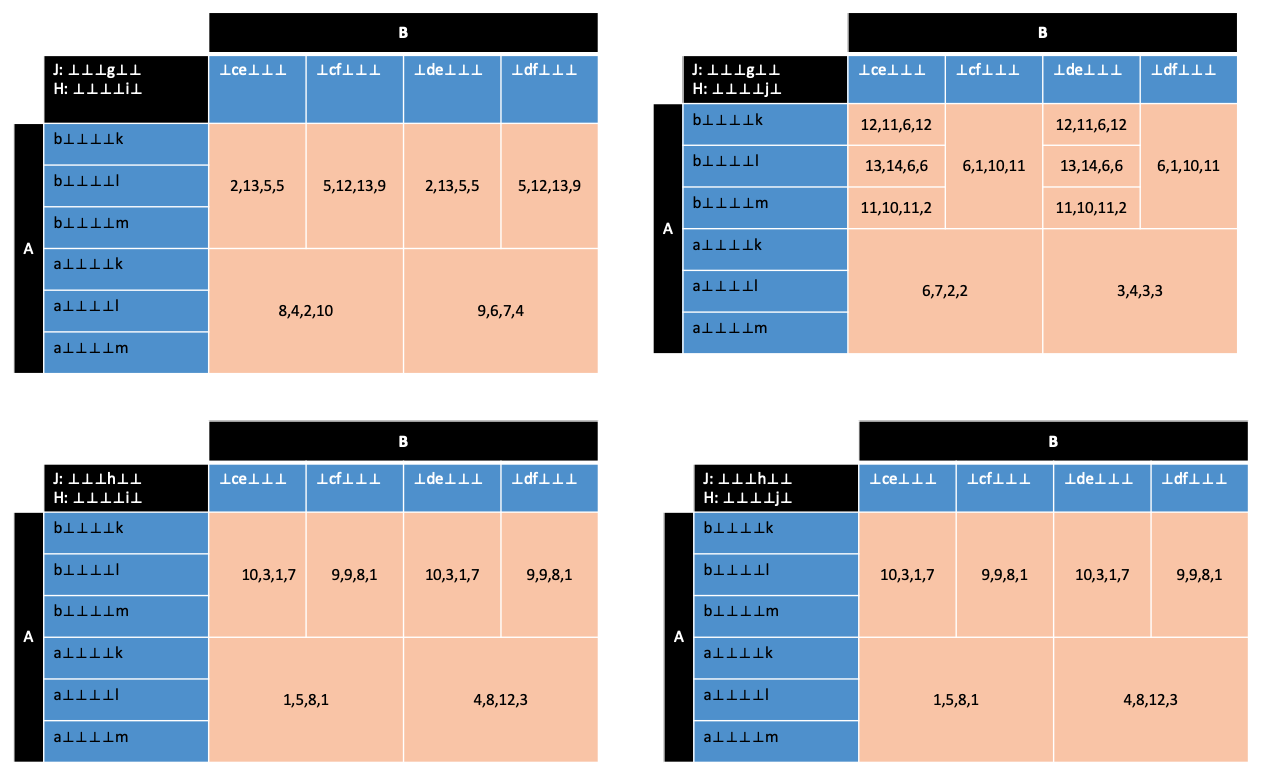}
\caption{The running example, shown in strategic form. For ease of view, we merged cells that correspond to the same complete history (and payoffs) whenever this was possible. There are 14 possible different complete histories and payoff distributions. As the matrix is four-dimensional (four agents), we show four separate planes for fixed strategies of John and Helen, where Alice is on the rows and Bob is on the columns. As can easily be seen, the last three strategies for Alice are equivalent. This will be addressed in the reduced strategic form.}
\label{figure-strategic-form}
\end{figure}

\subsection{Reduced strategic form}

The strategic form of a spacetime game can be simplified by eliminating redundant strategies\footnote{This is similar to the reduced strategic form of games in extensive form with imperfect information.}. Two  strategies are equivalent if they differ at a decision point that cannot be actually reached given the other choices of that same agents at other decision points that timelike-precede it.

\begin{definition}[Reduced strategy]
Given a strategy $\Sigma_i$ for agent $i$ and a decision point $\hat{I}_{i,j}$ in its support, we say that this decision point is not actual if its contingency coordinates conflict with $\Sigma_i$ on at least another decision point at which agent $i$ (formerly) decides.
Two strategies $\Sigma_i$ and $\Sigma'_i$ are equivalent if they only differ at not actual decision points. A canonical reduced strategy can be obtained for each suchly obtained equivalent class by un-assigning decision points that are not actual. A reduced strategy is thus a restriction of the original strategy.
\end{definition}

Figure \ref{figure-spacetime-example-assignments} shows the reduced strategy of Alice (A), where the second decision has been removed because it is irrelevant. It also shows the reduced strategic form.

\begin{theorem}[Unique reduced strategic form of a spacetime game]
Given a spacetime game with perfect information, if we select, for each agent, one of their reduced strategies and build the corresponding reduced strategy profile as their union, there exists a unique complete history that is a restriction of the  reduced strategy profile. Thus, a game in normal form can be built based on the reduced strategy spaces, and it is called the reduced strategic form of the spacetime game.
\end{theorem}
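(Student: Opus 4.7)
The plan is to reduce this claim to the previously proved uniqueness theorem for the (unreduced) strategic form. For each agent $i$, pick an arbitrary full strategy $\Sigma_i$ whose canonical reduction is $\sigma_i$ (such a lift always exists by the very definition of the equivalence classes from which reduced strategies are extracted). Let $\Sigma$ be the union of the $\Sigma_i$'s, i.e., the unreduced strategy profile, and let $\sigma$ be the union of the $\sigma_i$'s, the reduced strategy profile. By the previous theorem there is a unique complete history $h$ that is a restriction of $\Sigma$. I will show that this same $h$ is in fact the unique complete history restricting $\sigma$, from which the normal-form construction follows immediately by assigning payoffs via $h$.

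The key step is to argue that $h$ is a restriction of $\sigma$, not merely of $\Sigma$. Suppose for contradiction that $h(\hat{I}_{i,j}) \neq \perp$ for some decision point $\hat{I}_{i,j}$ at which $\sigma_i$ is undefined. Then $\hat{I}_{i,j}$ was pruned in the canonical reduction, i.e., it is not actual with respect to $\Sigma_i$, so by definition there exists another decision point $\hat{I}_{i,k}$ of the same agent $i$ at which $\gamma_{i,j}(\hat{I}_{i,k}) \neq \perp$ and $\gamma_{i,j}(\hat{I}_{i,k}) \neq \Sigma_i(\hat{I}_{i,k})$. On the one hand, the consistency condition satisfied by histories, applied to the pair $(\hat{I}_{i,j}, \hat{I}_{i,k})$, forces $h(\hat{I}_{i,k}) = \gamma_{i,j}(\hat{I}_{i,k})$. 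On the other hand, since $h$ is a restriction of $\Sigma$ and $\Sigma_i(\hat{I}_{i,k})$ is defined (strategies assign actions at all of agent $i$'s decision points), we have $h(\hat{I}_{i,k}) = \Sigma_i(\hat{I}_{i,k})$. Combining these two equalities contradicts the conflict that defines non-actuality. Hence $h$ assigns actions only at decision points on which $\sigma$ is defined, which is precisely the statement that $h$ is a restriction of $\sigma$.

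Uniqueness and well-definedness then follow almost mechanically. If $h'$ is any complete history that is a restriction of $\sigma$, then since $\sigma$ is itself a restriction of $\Sigma$, the history $h'$ is also a restriction of $\Sigma$, and the previously proved uniqueness theorem yields $h' = h$. In particular, $h$ does not depend on which representative lift $\Sigma_i$ was picked inside each equivalence class, so the construction descends well to reduced strategy profiles and induces a game in normal form on the cartesian product of the reduced strategy spaces. The main technical obstacle is the non-actuality argument in the middle paragraph: everywhere else one only manipulates restrictions and invokes the previous theorem, but that step requires coupling, at the same decision point $\hat{I}_{i,k}$, the constraint coming from consistency with contingency coordinates and the constraint coming from being a restriction of $\Sigma$, which is exactly what forces pruned decision points to remain unassigned in $h$.
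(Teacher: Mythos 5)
Your proof is correct, and while it shares the paper's basic strategy for existence (lift each reduced strategy to a full strategy in its equivalence class, take the union $\Sigma$, and invoke the previous theorem to obtain the unique complete history $h \subseteq \Sigma$), it diverges from the paper in two worthwhile ways. First, your middle paragraph proves the \emph{domain-containment} half of ``$h$ is a restriction of $\sigma$'' --- that $h$ cannot assign an action to a decision point pruned in the canonical reduction --- by coupling the history-consistency condition $h(\hat{I}_{i,j})\neq\perp \wedge \gamma_{i,j}(\hat{I}_{i,k})\neq\perp \implies h(\hat{I}_{i,k})=\gamma_{i,j}(\hat{I}_{i,k})$ with $h(\hat{I}_{i,k})=\Sigma_i(\hat{I}_{i,k})$ to contradict the conflict defining non-actuality. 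The paper's existence argument only checks the other half (agreement on the intersection of supports) and leaves this direction implicit, so your version is the more complete of the two. Second, for uniqueness the paper runs a separate ad absurdum argument that extends a hypothetical second history $h'$ to a strictly larger history $h''$ and rechecks the consistency condition; you instead observe that any complete history restricting the reduced profile $\sigma$ also restricts $\Sigma$ (restriction, as graph containment, is transitive) and so coincides with $h$ by the uniqueness already established for full strategy profiles. Your route is shorter and reuses the earlier theorem rather than re-proving an extension lemma; the paper's route has the minor advantage of not depending on a choice of lift at all, though you correctly note that your $h$ is independent of the lift precisely because it is characterized as the unique complete history restricting $\sigma$.
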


\begin{proof}
A reduced strategy of an agent is an equivalence class on the entire strategy space of that agent. Thus, we can pick a strategy of this agent in this equivalent class, and it is an extension of the reduced strategy. If we do it for all agents, we obtain a strategy profile $\sigma$, and we showed that there is a unique complete history $h$ that matches this strategy profile on its support ($h\subseteq\sigma$).

Now, if we take a decision point $\hat{I}_{i,j}$ in the support of the reduced strategy $\sigma_i$ of an agent $i$, it is by definition an actual decision point, and it implies that its contingency coordinates agree with all other decisions of the agent in this reduced strategy $\sigma$.

If the complete history $h$ assigns an action to $\hat{I}_{i,j}$,i.e.

$$h(\hat{I}_{i,j})=a$$

then we have

$$\sigma(\hat{I}_{i,j})=h(\hat{I}_{i,j})$$

because $h$ is the complete history matching $\sigma$.

Then, we must also have 

$$\sigma_i(\hat{I}_{i,j})=\sigma(\hat{I}_{i,j})$$

by definition of the strategy profile.

Then, if we call the original reduced strategy for agent $i$ $\sigma'_i$, this reduced strategy must also assign this same action to our decision point because the decision actually happens in the history:

$$\sigma'_i(\hat{I}_{i,j})=\sigma_i(\hat{I}_{i,j})$$

$h$ thus perfectly matches $\sigma'_i$ on the intersection of their supports, and this is true for any agent $i$. $h$ is thus a restriction of the reduced strategy $\sigma'$. This proves existence.

For uniqueness, let us proceed ad absurdum and assume that there are two different complete histories $h$ and $h'$ that are restrictions of the reduced strategy profile $\sigma'$, and let $i$ be an agent for which $h$ and $h'$ differ on one of their decision points $\hat{I}_{i,j}$.

Since these complete histories are restrictions of $\sigma'$, and they differ on $\hat{I}_{i,j}$, then at least one of them, say $h$ without loss of generality, must assign an action to $\hat{I}_{i,j}$.

$$h(\hat{I}_{i,j})=a$$

However, since it is a restriction of $\sigma'$, we must have:

$$h(\hat{I}_{i,j})=\sigma'(\hat{I}_{i,j})=\sigma'_i(\hat{I}_{i,j})=a$$

It follows that because $h$ and $h'$ differ on this decision point,

$$h'(\hat{I}_{i,j})=\perp$$

Now, let us extend $h'$ to $h''$ by assigning

$$h''(\hat{I}_{i,j})=a$$


Now, let us pick any two decision points $\hat{I}_{k,l}$ and $\hat{I}_{m,n}$ that timelike-precede $\hat{I}_{i,j}$, and assume

$$h''(\hat{I}_{k,l})\neq \perp \wedge \gamma_{k,l}(\hat{I}_{m,n})\neq \perp$$

It follows that

$$h(\hat{I}_{k,l})\neq \perp \wedge h'(\hat{I}_{k,l})\neq \perp$$

and thus because $h$ and $h'$ are histories:

$$h(\hat{I}_{m,n})=h'(\hat{I}_{m,n})=\gamma_{k,l}(\hat{I}_{m,n})$$

Thus we must have have, by definition of $h''$:

$$h''(\hat{I}_{m,n})=\gamma_{k,l}(\hat{I}_{m,n})$$.

and $h''$ is a more complete history than $h''$, which is a contradiction.

Thus, the complete history that is a restriction of a reduced strategy profile is unique. $\square$
\end{proof}
 
 Figure \ref{figure-spacetime-example-assignments} shows the reduced strategy of Alice (A), where the second decision has been removed because it is irrelevant. It also shows the corresponding reduced strategy profile. The corresponding complete history remains unchanged.

Figure \ref{figure-reduced-strategic-form} shows the reduced strategic form of our example. The only strategy that was reduced is Alice's strategy because if her first decision is b, her second decision is irrelevant in the determination of the complete history. This removes two rows from each table. Note that, even in a reduced strategic form, the same complete histories may still appear at several places (the same applies to the reduced strategic form of games in extensive form).

\begin{figure}
\centering\includegraphics[width=\textwidth]{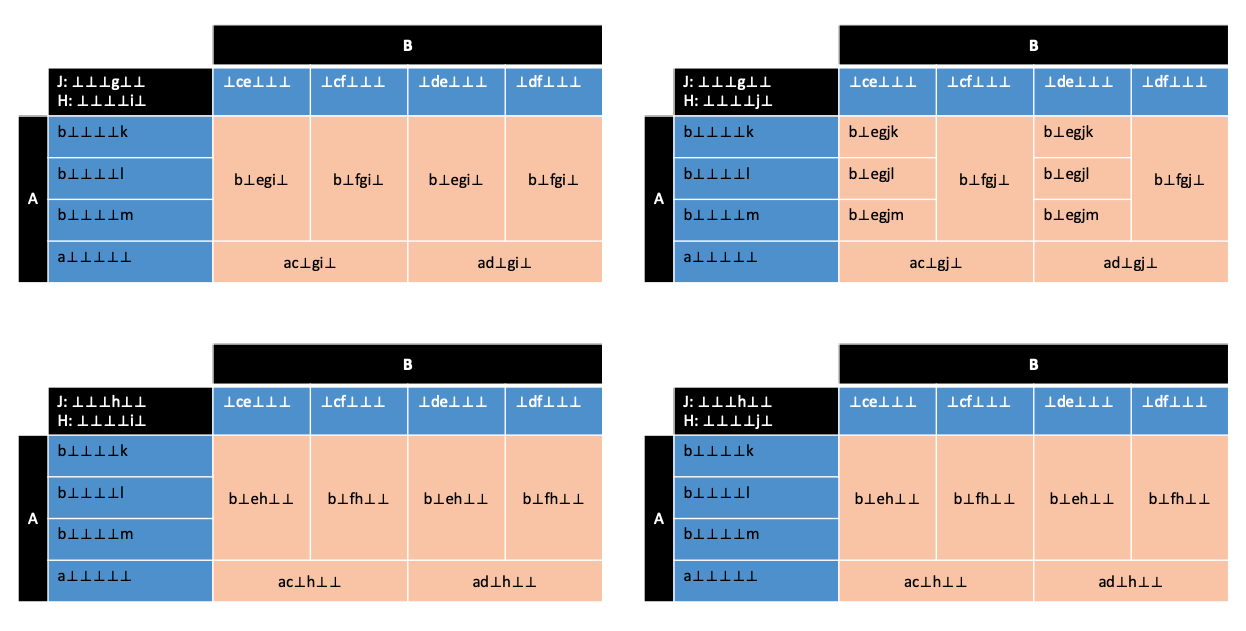}
\caption{The running example, shown in reduced strategic form. We display the complete histories rather than the payoffs, for pedagogical reasons, as the payoffs are obvious and the same as in the other figure (noting that the last three rows of each table in the (normal) strategic form are identical).}
\label{figure-reduced-strategic-form}
\end{figure}

Since we have built the reduced strategic form of a spacetime game with perfect information, many concepts from the game theory literature to spacetime games are directly applicable to spacetime games: Nash equilibria, rationalizability, individual rationality, and so on. For example, a Nash equilibria of the spacetime game is a strategy profile that is an Nash equilibrium of the strategic form of the game, and so on.

\section{Interpretation of some games in extensive form with imperfect information as spacetime games}

Given a spacetime game with perfect information, we now build an equivalent game in extensive form with incomplete information. 

\subsection{Linearization of a spacetime game}

It is possible to list all decision points in an order that is compatible with their spacetime coordinates, i.e., whenever two decision points $\hat{I}_{i,j}$ and $\hat{I}_{k,l}$ are such that $\hat{I}_{i,j}\prec \hat{I}_{k,l}$, then $\hat{I}_{i,j}$ must precede $\hat{I}_{k,l}$ in the list. Formally, we simply extend the partial order $\prec$ to a total order denoted $<$, and this choice is arbitrary.


How to build such a list is well documented in the theoretical Computer Science literature, and is known as topological ordering. Such a list is not necessarily unique because spacelike-separation gives a few degrees of freedom, but it always exists. It is known that any DAG has at least such a linear ordering, and algorithms are also known (and intuitive), and consist of a traversal of the DAG in a way that parents are always visited before their children. If it were impossible to build such a list because of a cycle in the order of events (and thus, the causal dependency graph would not be a DAG), we would be looking at a closed timelike curve, which does not exist in Minkowski spacetime\footnote{However, they may exist in general relativity, which is out of the scope of this paper, but in the scope of many quite enjoyable time-travel movies.}.

We denote the linearized list of decision points $(\hat{I}_1, \hat{I}_2, ..., \hat{I}_n)$ where $n$ is the total number of decision points. Thus, when a decision point has one index, we mean its absolute position in the ordered list (arbitrarily) selected in the former paragraph. When it has two indices, we mean, as in the rest of the paper, that the first index is the agent, and the second index its (arbitrary) index within the agent's decision points.

In our example, the list of decision points we take is $\hat{I}_{A,1}, \hat{I}_{B,1}, \hat{I}_{B,2}, \hat{I}_{J,1}, \hat{I}_{H,1}, \hat{I}_{A,2}$. Note that other lists compatible with timelike separation would be as acceptable so that the choice of linearization is really arbitrary: $\hat{I}_{A,1}, \hat{I}_{J,1},\hat{I}_{B,1}, \hat{I}_{B,2},  \hat{I}_{H,1}, \hat{I}_{A,2}$ or  $\hat{I}_{J,1}, \hat{I}_{H,1},\hat{I}_{A,1}, \hat{I}_{B,1}, \hat{I}_{B,2},  \hat{I}_{A,2}$ for example.

It is crucial to distinguish between the two orders we have introduced: a partial order with timelike-separation semantics relative to spacetime, and a total order based on the selected ordered list of decision points, the latter being an extension\footnote{We can talk about the superset of a relation because a relation between two sets is a subset of the cartesian product between these two sets.} of the former. We will denote the former $\prec$ and the latter $<$.

With a linearization of the game, it follows that complete or incomplete histories, strategies, strategy profiles, reduced strategies can also all be written as a list of (possibly missing) actions. Actually, this is what we implicitly did in Figure \ref{figure-spacetime-example-assignments}, as we used such a sequence, but we did not say at this point that this was a linearization of the DAG.

\subsection{The tree structure}

The next step is to only consider some incomplete histories, more precisely, those that are prefixes (according to the linearization) of a complete history. As we will see, they implicitly form a tree structure that will naturally yield an extensive form.

\begin{definition}[Prefix of a history]
Given a choice of linearization of a spacetime game (and reusing the same notations), we say that a history $h$ is a prefix of another history $h'$ if there exists some integer $k$ such that $h$ and $h'$ exactly match on the first $k$ decision points, and $h$ leaves all decision points beyond the $k$-th decision point unassigned.
We call $H$ the set of all incomplete histories that are a prefix of a complete history and $Z$ the set of all complete histories.
\end{definition}

Note that the empty history is also an incomplete history that matches this definition. For the purpose of building an equivalent game in extensive form with imperfect information, we only consider incomplete histories that are the prefix of a complete history, i.e., only those in $H$.

Having done this, a tree structure connecting all histories in $H\cup Z$ naturally appears, in which a parent history is always a prefix of all its children histories. Formally, this tree is the transitive reduction of the prefix relation on $H\cup Z$.

\begin{figure}
\centering\includegraphics[width=\textwidth]{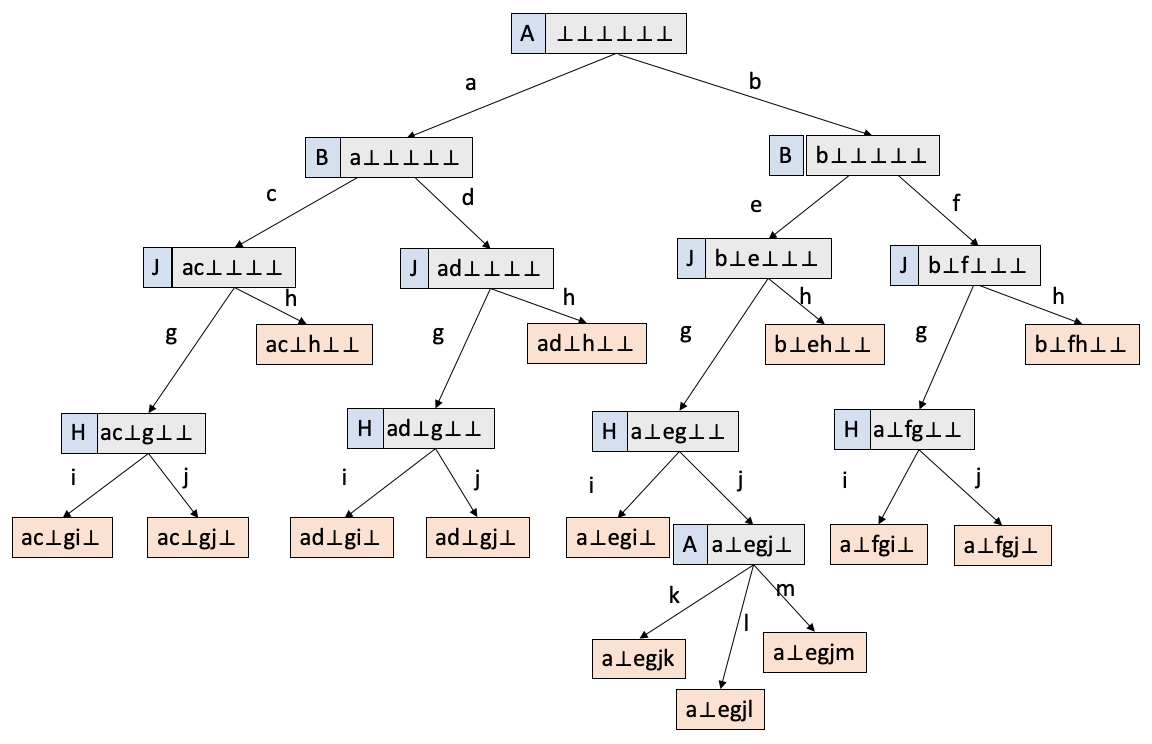}
\caption{The tree structure implicitly connecting all complete histories as well as the incomplete histories that are prefixes of a complete history, according to an arbitrary choice of linearization. Note that, if a different linearization is chosen, the tree will be different..}
\label{figure-prefix-code-tree}
\end{figure}

It directly follows from our restriction of $H$ to prefixes of complete histories that all children of a given incomplete history $h$ in $H$ differ from $h$ by assigning an action to a decision point at some position $l>k$ (the same l for all children), and there is a bijection between the children and the actions in $\hat{\chi}(\hat{I}_l)$. This is a lemma that we can prove.

\begin{lemma}
Given a choice of linearization of a spacetime game (and reusing the same notations), and an incomplete history $h$ in $H$ (according to this linearization), the children of $h$ in the transitive reduction of the prefix relation all differ from $h$ in an assignment of a decision point (the same for all children) to some choice of action taken by the agent playing at this decision point.
\end{lemma}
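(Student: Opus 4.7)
The plan is to identify, for any incomplete $h \in H$, the unique decision point that every minimal prefix-extension of $h$ in $H \cup Z$ must assign. Let $k$ be the largest index in the linearization at which $h$ makes an assignment (set $k=0$ if $h$ is empty); then $h$ is unassigned beyond position $k$, and this is the natural cutoff witnessing that $h$ is a prefix of itself. Since $h$ is incomplete but lies in $H$, it is a strict prefix of some complete history $z \in Z$, so $z$ must assign at least one position strictly greater than $k$; I would let $l$ be the smallest such position in $z$.

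The key step is to show that this $l$ depends only on $h$, and is characterized as the smallest index $m > k$ such that $\hat{I}_m$ is actually reachable given $h$. For the forward direction, every decision point in the support of $\gamma_l$ timelike-precedes $\hat{I}_l$, and because the linearization extends $\prec$, it therefore sits at an index strictly below $l$. Since $z$ is a history assigning $\hat{I}_l$, the history-consistency condition forces each such required value to appear in $z$, and these assignments must sit at indices $\leq k$ (as $z$ is unassigned on $(k,l)$). Because $z$ coincides with $h$ on $\{1,\ldots,k\}$, the required contingency values all agree with $h$, so $\hat{I}_l$ is actually reachable given $h$. For minimality, no index $m$ with $k < m < l$ can be actually reachable given $h$; otherwise, by the same locality argument applied to $\gamma_m$, the extension obtained from $z$ by additionally setting $z(\hat{I}_m) := a$ for any $a \in \hat{\chi}(\hat{I}_m)$ would be a valid history, contradicting completeness of $z$.

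Next I would show that every child of $h$ has the form $h \cup \{(\hat{I}_l, a)\}$ for some $a \in \hat{\chi}(\hat{I}_l)$. Let $h' \in H \cup Z$ strictly prefix-extend $h$. If $h'(\hat{I}_l) = \perp$, take a complete history $z'$ of which $h'$ is a prefix (or $z' = h'$ if $h' \in Z$); the prefix condition forces $z'(\hat{I}_l) = \perp$, yet $z'$ coincides with $h$ on positions $\leq k$, so the characterization above shows $\hat{I}_l$ is actually reachable given $z'$, contradicting completeness. Hence $h'(\hat{I}_l) = a$ for some $a \in \hat{\chi}(\hat{I}_l)$. The single-assignment extension $h'' := h \cup \{(\hat{I}_l, a)\}$ is a valid history (its only new assignment is at an actually reachable point), lies in $H \cup Z$ (iteratively assign actually reachable decisions until no further extension exists), and satisfies $h \subsetneq h'' \subseteq h'$. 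Since $h'$ is a child, minimality rules out any strict intermediate, forcing $h' = h''$.

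Finally, to close the bijection with $\hat{\chi}(\hat{I}_l)$, I would verify the converse: each $h \cup \{(\hat{I}_l, a)\}$ is indeed a child, since membership in $H \cup Z$ is handled by the greedy completion, and no strict intermediate in $H \cup Z$ is possible when only one new assignment has been added. The delicate part is the characterization of $l$ in terms of $h$ alone; the rest of the argument is standard prefix-tree reasoning. The entire argument hinges on the fact that the linearization extends $\prec$, so the contingency coordinates of any decision point reference only strictly earlier indices in the linearization --- this locality is what keeps reachability of $\hat{I}_l$ a function of $h$ rather than of its chosen completion.
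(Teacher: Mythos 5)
Your proposal is correct, and it rests on exactly the same mechanism as the paper's proof: the support of the contingency coordinates $\gamma_l$ lies strictly below $\hat{I}_l$ in the linearization, every complete extension of $h$ agrees with $h$ on that initial segment, and maximality of a complete history then forces it to assign $\hat{I}_l$. The packaging differs. The paper argues by contradiction on two hypothetical children differing at distinct positions $\hat{I}_m < \hat{I}_n$ and derives that a complete extension of the second child would fail to be maximal; you instead constructively characterize the unique next decision point $l$ as the least reachable index beyond $h$'s cutoff and show every child is a single-assignment extension at $l$. Your version buys more: it establishes well-definedness of $l$ (hence of the induced $\rho$ and $\sigma$) and the bijection between children and $\hat{\chi}(\hat{I}_l)$, which the paper asserts in the surrounding text but does not prove inside the lemma. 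The only step you should tighten is the ``greedy completion'' used to show $h \cup \{(\hat{I}_l,a)\} \in H$ for every $a$: you need to check that the completion never assigns a position $\leq l$ left unassigned by $h$ (otherwise the prefix condition fails); this holds because any such position is permanently obstructed --- its contingency coordinates already conflict with, or recursively depend on, positions that $h$'s own complete extension could not assign --- but it is an argument, not an automatic consequence. Also note that ``the prefix condition forces $z'(\hat{I}_l)=\perp$'' silently uses the well-definedness of $l$ applied to $z'$ to conclude that the child's cutoff is at least $l$; you have the machinery for this, but the dependence should be made explicit.
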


\begin{proof}
First, we note that the children of a node in the tree can only differ by one assignment -- otherwise, we could create a new incomplete history that is a prefix of a complete history by assigning fewer decision points, intermediate between the parent and its child, which contradicts that the tree is a transitive reduction.

Second, let us assume that two children in the tree differ from their parent by assigning an action to different decision points $\hat{I}_m < \hat{I}_n$ (according to the linearization). Let us call these two incomplete histories $h$ and $h'$ in that order. Let $h''$ be a complete history that extends $h'$, i.e., $h'$ is a prefix of $h'$. It follows that

$$h'(\hat{I}_m)=h''(\hat{I}_m)=\perp$$

Let $h'''$ be a complete history that extends $h$, i.e., $h$ is a prefix of $h'''$. It follows that

$$h'''(\hat{I}_m)=h(\hat{I}_m)\neq\perp$$

And thus, by definition of a complete history, if $\gamma$ denotes the contingency coordinate mapping:

$$\forall \hat{I}_p, \gamma_m(\hat{I}_p)\neq \perp \implies h'''(\hat{I}_p)=\gamma_m(\hat{I}_p)$$.

But then, since $h''$ and $h'''$ exactly match at all decision points that precede $\hat{I}_m$ (and all decision points such that $ \gamma_m(\hat{I}_p)\neq \perp$ precede it by definition of $\gamma$):

$$\forall \hat{I}_p, \gamma_m(\hat{I}_p)\neq \perp \implies h''(\hat{I}_p)=\gamma_m(\hat{I}_p)$$.

Since $h''$ is a complete history, however, it implies that

$$h''(\hat{I}_m)\neq\perp$$

which contradicts what we established above.

As a consequence, it cannot be that two children of the same parent, in the transitive reduction of the prefix relation, different from the parent at different positions.$\square$
\end{proof}

We can now use this lemma to define two new functions. First, we define a function $\rho$ that maps any incomplete history $h$ in $H$ to the agent deciding at $\hat{I}_l$, where $\hat{I}_l$ is the decision point defined above for $h$. Second, we define the function $\sigma$ mapping an incomplete history $h$ with decision point $\hat{I}_l$ as defined above, and an action $a\in \hat{\chi}(\hat{I}_l)$ to the (incomplete or complete) history in $H\cup Z$ obtained by assigning $a$ to $\hat{I}_l$.

Also, we can use this lemma to partition H according to the decision point $\hat{I}_l$ at which their children differ in their assignment: all incomplete histories in H that share this same decision point $\hat{I}_l$ can be grouped in the same partition, and this partition can be canonically identified with $\hat{I}_l$ -- we denote this partition $I_l$, without the hat, and $I$ the overall partitioning that it induces on $H$. Likewise, we denote $\chi$ (without hat) the corresponding functions on $H$ such that $\chi(h)=\hat\chi(\hat{I}(h))$.

$\sigma$ is injective by construction because the previous incomplete history and action can be reconstructed straightforwardly by looking up and unassigning this action. The unique root of the suchly obtained tree is the empty history.

For example, in our running example of spacetime game, $$\sigma((a, c, \perp, \perp, \perp, \perp), g) = (a, c, \perp, g, \perp, \perp)$$

\subsection{Construction of the game with imperfect information}

We now have all the components necessary to build an equivalent game in extensive for with imperfect information.
\begin{theorem}
Given a spacetime game $$\Gamma=(N, A, \hat{I}, \hat{\chi}, A, u)$$
the tuple $$\hat\Gamma=(N, A, H, Z, \chi, \rho, \sigma, u, I)$$ where $H, Z, \rho, \sigma, \chi$ and $I$ are constructed with the procedure shown previously, is  a game in extensive form with imperfect information. Furthermore, it has the same strategic form as the spacetime game.
\end{theorem}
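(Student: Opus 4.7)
The plan has two parts: first verify that $\hat\Gamma$ satisfies every clause of the definition of a game in extensive form with imperfect information, then exhibit a canonical bijection between strategy profiles of $\Gamma$ and those of $\hat\Gamma$ that preserves the induced complete history and hence the payoffs.

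For the structural verification: $N$ and $A$ are inherited from $\Gamma$; $H$ and $Z$ are disjoint since elements of $Z$ are complete histories (admitting no proper extension) while elements of $H$ are strict prefixes. The preceding lemma guarantees that every $h \in H$ has a unique distinguished decision point $\hat{I}_l$ shared by all its children, so $\chi(h) = \hat\chi(\hat{I}_l)$ and $\rho(h)$ (the agent acting at $\hat{I}_l$) are well-defined. The payoff function $u$ on outcomes is inherited from $\Gamma$ on complete histories. The successor $\sigma(h,a)$ is defined exactly on $\chi(h)$ by construction and is injective because from $\sigma(h, a)$ one recovers $(h, a)$ by unassigning the latest (under the linearization) non-$\perp$ coordinate, whose removed value is $a$. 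Iterating this removal always terminates at the empty history, which is the unique root. Finally, the partition $I$ groups elements of $H$ by their distinguished decision point, which is clearly an equivalence relation; its two compatibility conditions with $\chi$ and $\rho$ hold since both depend only on $\hat{I}_l$.

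For the strategic-form equivalence, I would exhibit the canonical bijection $\Phi$ between strategy profiles. A pure strategy of agent $i$ in $\hat\Gamma$ assigns an action in $\chi(h) = \hat\chi(\hat{I}_l)$ to each $h$ with $\rho(h) = i$, constant on each information set; equivalently, it assigns an action in $\hat\chi(\hat{I}_{i,j})$ to each of $i$'s decision points, i.e., a strategy of $i$ in $\Gamma$. Combining over agents yields the bijection on strategy profiles. To check payoff equality, fix $\Sigma$ in $\Gamma$ and let $h^\star$ be the unique complete history it induces, by the earlier theorem of this section. The play of $\Phi(\Sigma)$ in $\hat\Gamma$ terminates at some leaf $h^{\star\star} \in Z$. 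By induction along the traversed root-to-leaf path, using at each step that the action chosen at the current information set equals $\Sigma(\hat{I}_l)$ where $\hat{I}_l$ is the distinguished decision point, $h^{\star\star}$ is a restriction of $\Sigma$; by uniqueness from the earlier theorem, $h^{\star\star} = h^\star$, and the payoffs coincide.

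The main obstacle I anticipate is this final induction: one must ensure that the path's assignments always agree with $h^\star$, invoking simultaneously the contingency-coordinate consistency and the preceding lemma to show that the positions in the linearization left $\perp$ in $h^\star$ (decision points that are not actually reached) correspond precisely to positions never chosen as distinguished decision points along the traversed path. This is essentially bookkeeping once the setup is unwound, but it is where the substance of the proof lies.
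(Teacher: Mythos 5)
Your proposal is correct, and for the first half (verifying the clauses of the definition: disjointness of $H$ and $Z$, well-definedness of $\chi$ and $\rho$ via the lemma on the distinguished decision point, injectivity of $\sigma$ by unassigning the last coordinate, the empty history as unique root, and the compatibility of the partition $I$ with $\rho$ and $\chi$) it follows essentially the same by-construction route as the paper. Where you genuinely diverge is on the second claim. The paper disposes of the strategic-form equivalence in one sentence, observing only that both strategy spaces are the same cartesian product of per-decision-point action sets; it never verifies that a given strategy profile yields the \emph{same payoff} in both games. You correctly identify that this requires more: one must check that the leaf reached by playing $\Phi(\Sigma)$ down the tree of $\hat\Gamma$ is exactly the unique complete history $h^\star$ induced by $\Sigma$ in $\Gamma$. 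Your induction along the root-to-leaf path does this, and in fact the bookkeeping you worry about at the end is unnecessary: each node on the path is by construction a restriction of $\Sigma$, so the terminal leaf is a complete history that is a restriction of $\Sigma$, and the uniqueness clause of the earlier theorem immediately forces it to equal $h^\star$ without any separate accounting of which positions remain unassigned. So your argument is not only valid but closes a gap the paper leaves open; the only simplification available is to lean harder on uniqueness at the final step.
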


\begin{proof}
The sets of agents, actions, nodes, outcomes suchly constructed do not require any proof, as they are arbitrary. $\chi$ is by construction consistent with the definition of a game with imperfect information because it maps each node to an action. $\rho$ is also by construction consistent because it maps each node to an agent.

$\sigma$ is by construction mapping pairs of node and action to either a node (incomplete history) or an outcome (complete history). It is injective by construction because the antecedent node and action can be directly reconstructed from the image by just splitting it. Also, $\sigma$ defines a graph on nodes and outcomes that has a unique connected component, forming a tree the root of which is the empty history.

$u$ is by construction consistent because it maps each outcome and agent to a real number.
Finally, $I$ is a partition of $H$ by construction. It is compatible with $\rho$ by construction of $\rho$. Likewise, it is compatible with $\chi$ by construction of $\chi$.

The fact that both games have the same strategic form follows directly from the fact that the strategy spaces are obtained, in both cases, by taking the cartesian products of each agent's possible choice of action at all the decision points or information sets at which they play.
$\square$
\end{proof}

\begin{figure}
\centering\includegraphics[width=\textwidth]{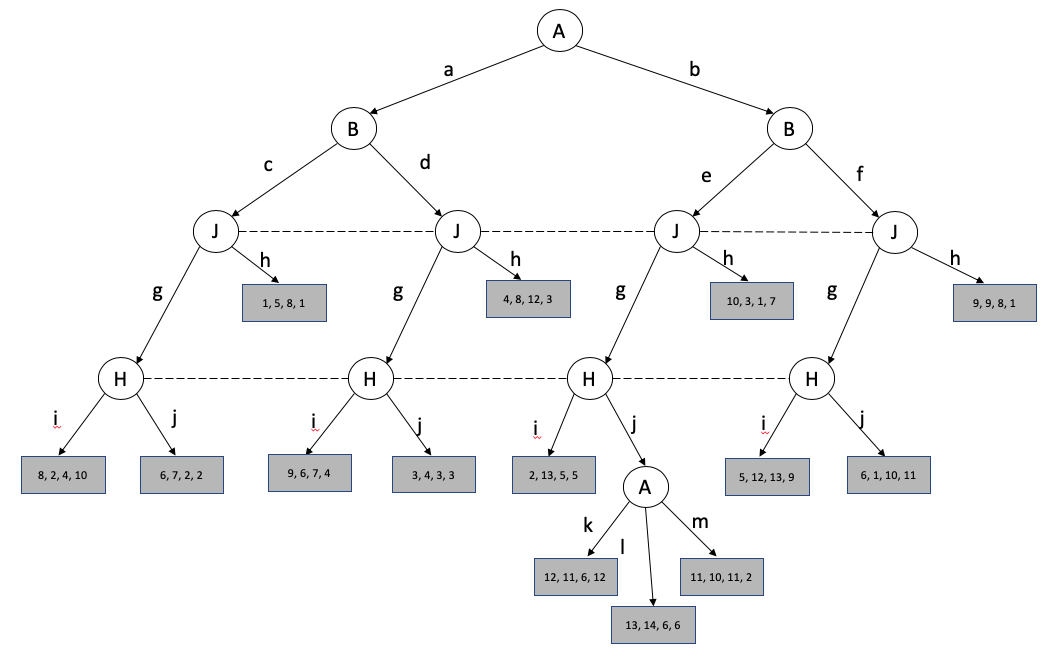}
\caption{The game associated with the example spacetime configuration with six decision points and four agents. Each consistent complete history corresponds to an outcome (gray node) and each consistent incomplete history corresponds to a choice node (white node). The letters are the initials of the agents making decisions. The actions are indicated on the edges. Information sets are shown with dashed lines. The outcomes are numbered arbitrarily. The preferences of the agents are not shown, but would appear as tuples of four numbers at each outcome with ordinal semantics.}
\label{figure-imperfect-information-equiv}
\end{figure}

Figure \ref{figure-imperfect-information-equiv} shows the game in extensive form and with imperfect information that is obtained from the spacetime game described in Figures \ref{figure-spacetime-example}, \ref{figure-spacetime-example-timelike-dependence}, \ref{figure-spacetime-example-assignments}, and \ref{figure-strategic-form}.

A game obtained via this construction scheme thus has a natural interpretation in which the players are agents located in Minkowski spacetime making decisions. The information sets are interpreted as the situations in which decisions are spacelike-separated and thus no signal can be sent between two agents.

Conversely, some (but not all) games in extensive form and with imperfect information can be interpreted as a spacetime game with perfect information.

\begin{definition}
A game in extensive form and with imperfect information $\Gamma$ is said to be interpretable as a spacetime game if there exists a spacetime game $$\Gamma'=(N, A, \hat{I}, \hat{\chi}, A, u)$$ such that $$\hat\Gamma'=(N, A, H, Z, \chi, \rho, \sigma, u, I)$$ where $H, Z, \chi, \rho, \sigma$ and $I$ are constructed with the procedure shown previously is isomorphic to $\Gamma$. By isomorphic, we mean that the two games $\Gamma$ and $\hat\Gamma'$ are exactly the same, up to a simple bijection on the sets.
\end{definition}

Not all games with imperfect information can be obtained in this way. Further work includes exactly characterizing this subclass of games in extensive form and with imperfect information that is interpretable as spacetime games. It is clear that, for example, games with imperfect recall cannot be interpreted as spacetime games because spacetime games always correspond to games with imperfect information and perfect recall.

But perfect recall alone still is not enough to characterize games in extensive form and with imperfect information that are interpretable as spacetime games. Figure \ref{figure-counterexample} provides a counterexample of game in extensive form with imperfect information and perfect recall that cannot be interpreted as an equivalent spacetime game with perfect information.

\begin{figure}
\centering\includegraphics[width=\textwidth]{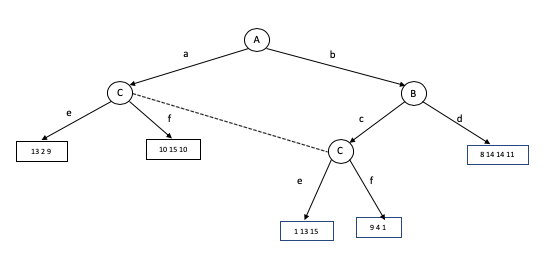}
\caption{A game in extensive form with imperfect information that cannot be interpreted as a spacetime game with perfect information. Indeed, C knows B's decision and B knows A's decision, but C does not know A's decision, which would be incompatible with the transitivity of timelike-precedence in a spacetime game.}
\label{figure-counterexample}
\end{figure}

We could also consider that a game in extensive form and with imperfect information is indirectly interpretable as a spacetime game if it is equivalent to a game that can be interpreted as such. By equivalent, we mean that they have the same reduced strategic form.

However, it would follow that any game is indirectly interpretable as a spacetime game: indeed, we can simply build a spacetime game out of its reduced strategic form, interpreted as a game in normal form and thus, spacelike-separation; this is thus not very helpful.

 \citet{Thompson1997} characterized the games that share the same reduced strategic forms by showing that they can be obtained from each other by a chain of transformations of only four kinds. These transformations are worth discussing because some can be interpreted in terms of spacetime. They are also discussed by \citet{Rubinstein1994}.

The first kind of transformation is called \emph{principle of interchange of moves}. This kind of transformation corresponds to a different choice of linearization of the spacetime game: two spacelike-separated decisions can appear in any order in the game in extensive form with imperfect information so that the same spacetime game can correspond to different games in extensive form with imperfect information that differ in this way.

The second kind of transformation is called \emph{principle of coalescing of moves}. This corresponds to merging or splitting one decision by an agent into several decisions. Doing so in the game in extensive form amounts to doing the same in the equivalent spacetime game.

The third kind of transformation is called \emph{principle of addition of a superfluous move}. This kind of transformation corresponds to the addition of decision points that can never be actually reached, or to the planning ahead of strategies (i.e., plan decisions that may not actually go into effect). Since in our spacetime game, we prune all decisions that cannot be reached, our transformation can only produce games in extensive form with no superfluous moves, in particular, all outcomes are always distinct.

The fourth kind of transformation is called \emph{principle of inflation-deflation}. This kind of transformation corresponds to the introduction (or removal) of the constraint that players have perfect recall. If we require that the same agent cannot make a decision at two spacelike-separated (including colocated) decision points, then the corresponding game with imperfect information has perfect recall. This requirement is natural and corresponds to the fact that the timeline\footnote{the set of all spacetime coordinates ever occupied by an agent} of an agent must follow a timelike curve, that no observer in Minkowski spacetime can see the same agent at two spacelike-separated positions\footnote{This could, however, occur in general relativity in the presence of closed timelike curves, which is out of the scope of this paper.}, and that an agent can only make one decision at any time\footnote{We mean here their proper time.}.

\section{Another example: the EPR experiment}

Let us illustrate the contribution of this paper with another example taken from quantum foundations: the EPR experiment, where the series of decisions and measurements are interpreted as a game.

Figure \ref{figure-epr-spacetime} shows a visual representation of the configuration of this spacetime game. Alice and Bob are two physicists located far away from each other, by which we mean spacelike-separated. For example, Alice could be on Earth or Bob on Mars\footnote{In practice, experiments are carried out on Earth, but with very precise timing in order to ensure spacelike-separation. Physicists consider issues due to the lack of enforcement of spacelike separation in the protocol of experiments seeking to confirm that Nature breaks Bell inequalities to be ``loopholes''. See for example the Big Bell Test paper \citep{BigBellTest}}.

Alice and Bob each have a particle in their laboratory and can pick what they observe. The particles may be entangled. For simplicity, we will say that they can either observe the color (c) or the form (f) of this particle\footnote{In the real experiment, this could be, for example, the position or the momentum of a particle, or the choice of a horizontal or diagonal basis to measure a spin.}. Then, if Alice picked c, Ulysses can pick green (g) or blue (b); or if Alice picked f, Ulysses can pick round (r) or square (s). The same on Bob's side: Valentina can pick green (g) or blue (b) if Bob picked c, and she can pick round (r) or square (s) if Bob picked f. In total, these are six decision points.

Under this configuration, a domain of intense research in quantum foundations is to look at the possible joint probabilities over the four agents (the choices of Ulysses and Valentina, conditioned on the choices of Alice and Bob) and derive inequalities that this joint distribution must fulfill under certain hypotheses (e.g., locality, realism, free choice...). For more on this, see a tutorial by \citet{Colbeck2017}.

\begin{figure}
\centering\includegraphics[width=\textwidth]{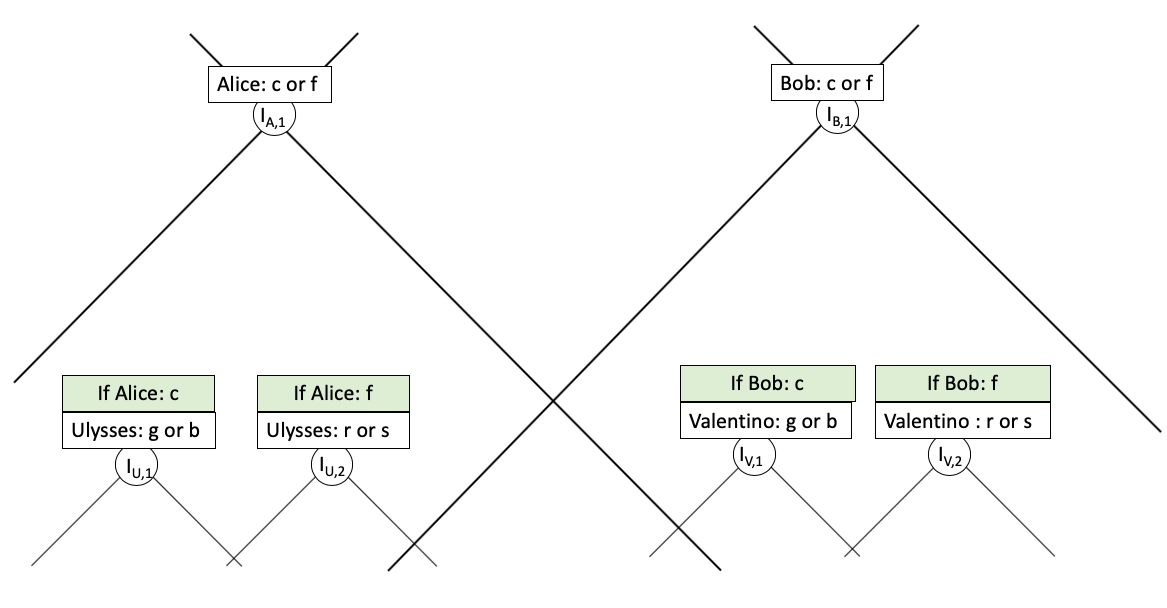}
\caption{The EPR experiment, where Alice and Bob are experimenters and Ulysses and Valentina represent the universe deciding on a measurement outcome.}
\label{figure-epr-spacetime}
\end{figure}

Figure \ref{figure-epr-dependence} shows the corresponding causal dependency graph (a DAG), where we can see that there are two connected components: one for Alice's laboratory and one for Bob's laboratory. In this example, it turns out that the actual precedence relation coincides exactly with timelike precedence. Figure \ref{figure-epr-contingency-coordinates} show the contingency coordinates corresponding to the configuration that we described, specifically, Ulysses and Valentina get to make decisions depending on what Alice and Bob previously decided. For these contingency coordinates, we arbitrarily picked a linearization of the DAG of the game, which corresponds to the order of the rows and columns on the corresponding matrix. Note that, if the contingency coordinates are displayed according to a linearization of the game, actual assignments (i.e., not $\perp$) always appear in a lower triangular matrix only.

\begin{figure}
\centering\includegraphics[width=0.5\textwidth]{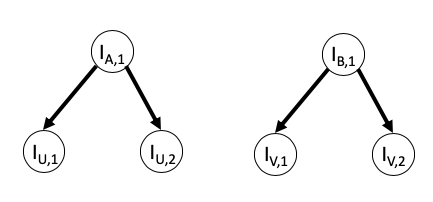}
\caption{The EPR experiment's causal dependency graph. In this specific game, it turns out that the actual precedence relation is identical to the timelike precedence relation so that this is also the graph of the actual precedence relation of the game.}
\label{figure-epr-dependence}
\end{figure}

\begin{figure}
\centering\includegraphics[width=0.7\textwidth]{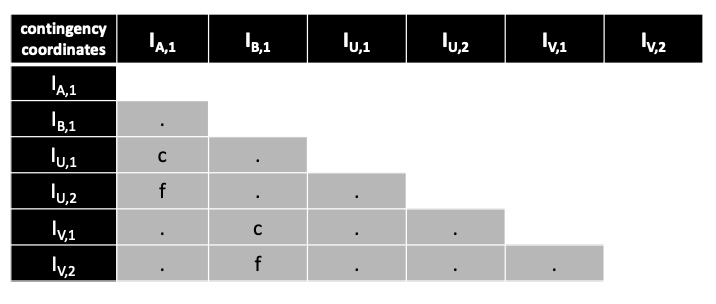}
\caption{The contingency coordinates of the EPR game.}
\label{figure-epr-contingency-coordinates}
\end{figure}

It is furthermore straightforward to assign payoffs (arbitrarily for the purpose of this paper) to each possible outcome of the experiment and turn it into a game where each agent makes their decision in order to maximize their utility.

Let us now build the histories. First, the complete histories and their associated distributions (in this order: Alice, Bob, Ulysses, Valentina) are

\begin{itemize}
\item $c, g, \perp, c, g, \perp$, assigned with payoff distribution (13, 2, 9, 1)
\item $c, g, \perp, c, b, \perp$, assigned with payoff distribution (10, 15, 10, 12)
\item $c, b, \perp, c, g, \perp$, assigned with payoff distribution (14, 16, 11, 3)
\item $c, b, \perp, c, b, \perp$, assigned with payoff distribution (16, 9, 8, 2)
\item $c, g, \perp, f, \perp, r$, assigned with payoff distribution (15, 3, 6, 13)
\item $c, g, \perp, f, \perp, s$, assigned with payoff distribution (11, 6, 3, 4)
\item $c, b, \perp, f, \perp, r$, assigned with payoff distribution (6, 1, 16, 5)
\item $c, b, \perp, f, \perp, s$, assigned with payoff distribution (12, 7, 7, 6)
\item $f, \perp, r, c, b, \perp$, assigned with payoff distribution (1, 13, 15, 7)
\item $f, \perp, r, c, g, \perp$, assigned with payoff distribution (9, 4, 1, 14)
\item $f, \perp, s, c, b, \perp$, assigned with payoff distribution (5, 11, 4, 15)
\item $f, \perp, s, c, g, \perp$, assigned with payoff distribution (2, 8, 12, 12)
\item $f, \perp, r, f, \perp, r$, assigned with payoff distribution (8, 14, 14, 11)
\item $f, \perp, r, f, \perp, s$, assigned with payoff distribution (4, 5, 5, 10)
\item $f, \perp, s, f, \perp, r$, assigned with payoff distribution (3, 12, 13, 8)
\item $f, \perp, s, f, \perp, s$, assigned with payoff distribution (7, 10, 2, 9)
\end{itemize}

And the incomplete histories that are prefixes of complete histories according to our choice of linearization are

\begin{itemize}
\item $\perp, \perp, \perp, \perp, \perp$
\item $c, \perp, \perp, \perp, \perp, \perp$
\item $c, b, \perp, \perp, \perp, \perp$
\item $c, b, \perp, c, \perp, \perp$
\item $c, b, \perp, f, \perp, \perp$
\item $c, g, \perp, \perp, \perp, \perp$
\item $c, g, \perp, c, \perp, \perp$
\item $c, g, \perp, f, \perp, \perp$
\item $f, \perp, \perp, \perp, \perp, \perp$
\item $f, \perp, r, \perp, \perp, \perp$
\item $f, \perp, r, c, \perp, \perp$
\item $f, \perp, r, f, \perp, \perp$
\item $f, \perp, s, \perp, \perp, \perp$
\item $f, \perp, s, c, \perp, \perp$
\item $f, \perp, s, f, \perp, \perp$
\end{itemize}

If we apply the conversion algorithm described in this paper, we obtain the game in extensive form and with imperfect information shown in Figure \ref{figure-epr-game}. As can be seen, this game has six information sets, one for each decision points in the EPR spacetime game: one for Alice, one for Bob, and two for each of Ulysses and Valentina. Each of the sixteen outcomes corresponds to a consistent complete history, and each of the fifteen nodes corresponds to a consistent incomplete history.
 
\begin{figure}
\centering\includegraphics[width=\textwidth]{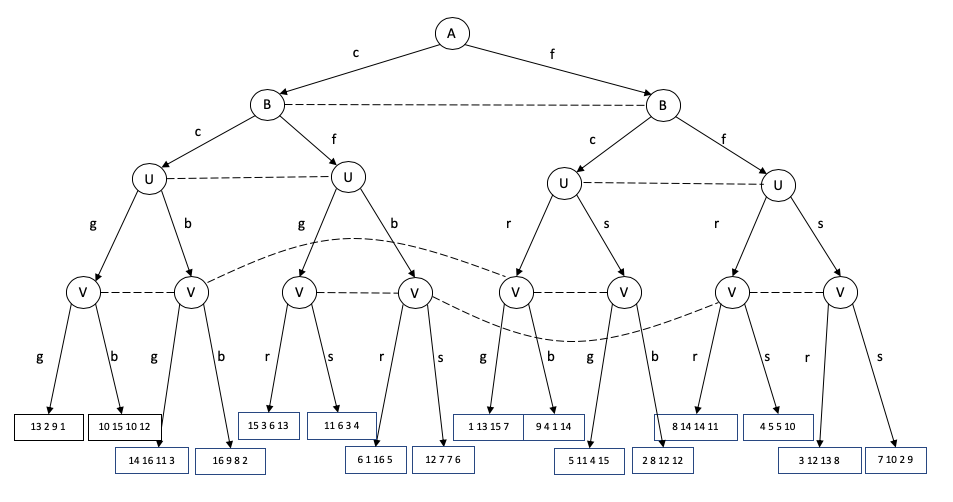}
\caption{The EPR experiment, where Alice and Bob are experiments, and Ulysses and Valentina represent the universe deciding on a measurement outcome.}
\label{figure-epr-game}
\end{figure}

This game can them be solved for any solution concept that makes sense for a game in extensive form with imperfect information.

\section{Conclusion}

In this paper, we introduced a new class of games played in Minkowski spacetime. They provide a least common denominator that unifies strategic games and dynamic games with perfect information. This framework is more fine-grained than the provably strictly wider class of all games in extensive form with imperfect information. Furthermore, this class of games is consistent with the theory of special relativity, according to which there cannot exist any such thing as games with (absolutely) simultaneous moves. Information sets are interpreted as decision points placed in spacetime, and imperfect information is interpreted as the spacelike separation of some decisions that cannot be known to each other. Games in normal form correspond to spacelike-separated decisions, and games in extensive form with perfect information to timelike-separated decisions. Equilibrium concepts defined for games in extensive form and with imperfect information can thus directly be used to describe rational choice in Minkowski spacetime by using the corresponding mapping. Any equilibrium concept, such as Nash equilibria, can naturally and directly be applied to spacetime games.


\bibliographystyle{spbasic}      

\end{document}